\newtheorem{theorem}{Theorem}[section]
\newtheorem{corollary}[theorem]{Corollary}
\newtheorem{lemma}[theorem]{Lemma}
\newtheorem{proposition}[theorem]{Proposition}
\newtheorem{claim}[theorem]{Claim}
\newtheorem{example}[theorem]{Example}
\newtheorem{observation}[theorem]{Observation}
\def\squarebox#1{\hbox to #1{\hfill\vbox to #1{\vfill}}}
\newcommand{\qed}{\hspace*{\fill}\vbox{\hrule\hbox{\vrule\squarebox{.667em}\vrule}\hrule}\smallskip}
\newenvironment{proof}{\noindent{\bf Proof:~~}}{\(\qed\)}
\newcommand{\diag}{d}
\newcommand{\io}{s} 
\newcommand{\no}{x}    
\newcommand{\oo}{y}    
\newcommand{\co}{c}    
\newcommand{\ao}{z}    
\newcommand{\oi}{v}    
\newcommand{\poa}{\mathrm{PoA}}
\newcommand{\game}{opinion game }
\def\coa{\tilde{c}}
 \newcommand{\xhdr}[1]{\subsubsection*{{\bf #1}}}
\newcommand{\omt}[1]{}
\newcommand{\Xomit}[1]{}
\newcommand{\supproof}[1]{}
\newcommand{\supproofeq}[1]{}
\newlength{\saveparindent}
\newlength{\saveparskip}
\begin{document}

%
%
%
%

\title{
How Bad is Forming Your Own Opinion?
\thanks{
Supported in part by 
the MacArthur Foundation,
the Sloan Foundation, 
a Google Research Grant,
a Yahoo!~Research Alliance Grant,
and NSF grants 
IIS-0910664, 
CCF-0910940, 
and IIS-1016099. 
}
}

\author{
David Bindel
\thanks{
Department of Computer Science,
Cornell University, Ithaca NY 14853.
Email: bindel@cs.cornell.edu.
}
\and
Jon Kleinberg
\thanks{
Department of Computer Science,
Cornell University, Ithaca NY 14853.
Email: kleinber@cs.cornell.edu.
}
 \and 
Sigal Oren
\thanks{
Department of Computer Science,
Cornell University, Ithaca NY 14853.
Email: sigal@cs.cornell.edu.
}
}

\begin{titlepage}

\maketitle

\begin{abstract}


The question of how people form their opinion
has fascinated economists and sociologists for quite some time.
In many of the models, a group of people in a social network, each holding
a numerical opinion, arrive at a shared opinion through
repeated averaging with their neighbors in the network.
Motivated by the observation that consensus is rarely reached
in real opinion dynamics, we study a related sociological model
in which individuals' intrinsic beliefs counterbalance 
the averaging process and yield a diversity of opinions.

By interpreting the repeated averaging as best-response dynamics
in an underlying game with natural payoffs, and the limit of
the process as an equilibrium, we are able to study the
cost of disagreement in these models relative to a social optimum.  
We provide a tight bound on the cost at equilibrium relative 
to the optimum; our analysis draws a connection between these
agreement models and extremal problems that lead to generalized eigenvalues.
We also consider a natural network design problem in this setting:
which links can we add to the underlying network to reduce the cost of
disagreement at equilibrium?


\end{abstract}

\thispagestyle{empty}
\end{titlepage}

\section{Introduction}
\label{sec:intro}
\xhdr{Averaging Opinions in a Social Network}
An active line of recent work in economic theory has considered
processes by which a group of people in a social network
can arrive at a shared opinion through a form of repeated averaging
\cite{acemoglu-misinformation,demarzo-opinions,golub-network-learning,jackson-networks-book}.
This work builds on a basic model of DeGroot \cite{degroot-opinions},
in which we imagine that each person $i$ holds an {\em opinion} 
equal to a real number
$\ao_i$, which might for example represent a position on a 
political spectrum, or a probability that $i$ assigns to a certain belief.
There is a weighted graph $G = (V,E)$ representing a social network, and 
node $i$ is influenced by the opinions of her neighbors in $G$, with
the edge weights reflecting the extent of this influence.
Now, in each time step node $i$ updates her opinion to be a weighted average
of her current opinion and the current opinions of her neighbors.

This body of work has developed a set of general conditions under
which such processes will converge to a state of {\em consensus}, 
in which all nodes hold the same opinion.  This emphasis on consensus,
however, can only model a
specific type of opinion dynamics, where the opinions of the group 
all come together.  
As the sociologist David Krackhardt has observed,
\begin{quote}
{\footnotesize
We should not ignore the fact 
that in the real world consensus is usually not 
reached. Recognizing this, most traditional 
social network scientists do not focus on an 
equilibrium of consensus. They are instead 
more likely to be concerned with explaining the 
lack of consensus (the variance) in beliefs and 
attitudes that appears in actual social influence
contexts \cite{krackhardt-rev-jackson}. }
\end{quote}

In this paper we study a model of opinion dynamics 
in which consensus is not reached in general, with
the goal of quantifing the inherent social cost of this 
lack of consensus.
To do this, we first need a framework that captures some of the underlying
reasons why consensus is not reached, as well as a way of measuring the
cost of disagreement.

\xhdr{Lack of Agreement and its Cost}
We begin from a variation on the DeGroot model due to 
Friedkin and Johnsen \cite{friedkin-initial-opinions},
which posits that each node $i$ maintains a persistent 
{\em internal opinion} $\io_i$. This internal opinion remains constant even
as node $i$ updates her overall opinion $\ao_i$ through averaging.
More precisely, if $w_{i,j} \geq 0$ denotes the weight on the
edge $(i,j)$ in $G$, then in one time step node $i$ updates
her opinion to be the average
\begin{equation}
\label{eq:avg-with-io}
\ao_i = 
\dfrac{\io_i + \sum_{j \in N(i)} w_{i,j} \ao_j}{1 + \sum_{j \in N(i)} w_{i,j}},
\end{equation}
where $N(i)$ denotes the set of neighbors of $i$ in $G$.
Note that, in general, the presence of $\io_i$ as a constant in each 
iteration prevents repeated averaging from bringing all nodes to the same opinion.
In this way, the model distinguishes between an individual's
intrinsic belief $\io_i$ and her overall opinion $\ao_i$;
the latter represents a compromise between the persistent value of $\io_i$ 
and the expressed opinions of others to whom $i$ is connected.
This distinction between $\io_i$ and $\ao_i$ also has parallels 
in empirical work that seeks to trace deeply held opinions such 
as political orientations back to differences in education and
background, and even to explore genetic bases for such
patterns of variation \cite{alford-opinion-genetic}.

Now, if consensus is not reached, how should we quantify the cost
of this lack of consensus?
Here we observe that since the standard models use averaging as
their basic mechanism, we can equivalently view nodes' actions in
each time step as myopically optimizing a quadratic cost function:
Updating $\ao_i$ as in Equation (\ref{eq:avg-with-io}) is 
the same as choosing $\ao_i$ to minimize 
\begin{equation}
\label{eq:ao-cost}
(\ao_i-\io_i)^2+\sum_{j \in N(i)} w_{i,j}(\ao_i-\ao_j)^2.
\end{equation}
We therefore take this as the {\em cost} that $i$ incurs by
choosing a given value of $\ao_i$, so that averaging becomes
a form of cost minimization.

Given this view, we can think of repeated averaging 
as the trajectory of 
best-response dynamics in a one-shot, complete information game 
played by the nodes in $V$,
where $i$'s strategy is a choice of opinion $\ao_i$, and her payoff
is the negative of the cost in Equation (\ref{eq:ao-cost}).

\xhdr{Nash Equilibrium and Social Optimality in a Game of Opinion Formation}
In this model, repeated averaging does converge to the
unique Nash equilibrium of the game defined by the individual cost functions
in (\ref{eq:ao-cost}): 
each node $i$ has an opinion $\no_i$ that is the weighted
average of $i$'s internal opinion and the (equilibrium) opinions of 
$i$'s neighbors.
This equilibrium will not in general correspond to the 
{\em social optimum}, the vector of node opinions 
$y$ that minimizes the {\em social cost},
defined to be sum of all players' costs:
$\co(\oo)=\sum_i \left( (\oo_i-\io_i)^2
+\sum_{j \in N(i)} w_{i,j}(\oo_i-\oo_j)^2 \right).$

The sub-optimality of the Nash equilibrium can be viewed in
terms of the {\em externality} created by a player's 
personal optimization: by refusing to move further toward
their neighbors' opinions, players can cause additional cost
to be incurred by these neighbors.
In fact we can view the problem of minimizing
social cost for this game as a type of 
{\em metric labeling problem} 
\cite{boykov-fast-apx-min,kleinberg-met-label},
albeit a polynomial-time solvable case of the problem
with a non-metric quadratic distance function on the real numbers:
we seek node labels that balance the value of a 
cost function at each node (capturing disagreement with
node-level information) and a cost function for label
disagreement across edges.
Viewed this way, the sub-optimality of Nash equilibrium becomes
a kind of sub-optimality for local optimization.

A natural question for this game is thus the {\em price of anarchy},
defined as the ratio between the cost of the Nash
equilibrium and the cost of the optimal solution.

\xhdr{Our Results: Undirected Graphs}
The model we have described can be used as stated in 
both undirected and directed graphs --- the only difference is
in whether $i$'s neighbor set $N(i)$ represents the nodes with whom 
$i$ is connected by undirected edges, or to whom $i$ links
with directed edges.
However, the behavior of the price of anarchy is very different
in undirected and directed graphs, and so we analyze them separately,
beginning with the undirected case.

As an example of how a sub-optimal social cost can arise at
equilibrium in an undirected graph, consider
the graph depicted in Figure~\ref{fig:no-compromise-example} --- 
a three-node path in which the nodes have internal opinions
$0$, $1/2$, and $1$ respectively.
As shown in the figure,
the ratio between the social cost of the Nash equilibrium and
the social optimum is $9/8$.
Intuitively, the reason for the higher cost of the Nash equilibrium
is that the center node --- by symmetry --- cannot usefully shift
her opinion in either direction, and so to achieve optimality the two outer
nodes need to compromise more than they want to at equilibrium.
This is a reflection of the externality discussed above,
and it is the qualitative source of sub-optimality in general for equilibrium
opinions --- nodes move in the direction of their neighbors,
but not sufficiently to achieve the globally minimum social cost.

\begin{figure}
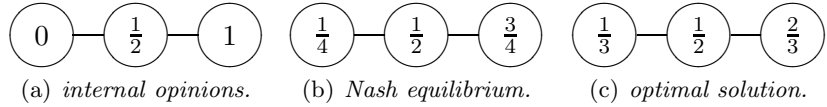

\begin{center}
\subfigure[\emph{internal opinions.}]{
\xygraph{ !{<0cm,0cm>;<1.25cm,0cm>:<0cm,1.25cm>::} !{(1,0) }*+[o]++[F]{0}="1" !{(2,0) }*+[o]++[F]{\frac{1}{2}}="2" !{(3,0) }*+[o]++[F]{1}="3" "1"-"2" "2"-"3"} 
 \label{fig:init-op}
}
\subfigure[\emph{Nash equilibrium.}]{
\xygraph{ !{<0cm,0cm>;<1.25cm,0cm>:<0cm,1.25cm>::} !{(1,0) }*+[o]++[F]{\frac{1}{4}}="1" !{(2,0) }*+[o]++[F]{\frac{1}{2}}="2" !{(3,0) }*+[o]++[F]{\frac{3}{4}}="3" "1"-"2" "2"-"3"}
 \label{fig:nash-op}
}
\subfigure[\emph{optimal solution.}]{
\xygraph{ !{<0cm,0cm>;<1.25cm,0cm>:<0cm,1.25cm>::} !{(1,0) }*+[o]++[F]{\frac{1}{3}}="1" !{(2,0) }*+[o]++[F]{\frac{1}{2}}="2" !{(3,0) }*+[o]++[F]{\frac{2}{3}}="3" "1"-"2" "2"-"3"}
 \label{fig:optimal-op}
}
\caption{
{\small
An example in which the two players on the sides 
do not compromise by the optimal amount, given that
the player in the middle should not shift her opinion.
The social cost of the optimal set of opinions is $1/3$,
while the cost of the Nash equilibrium is $3/8$.
\label{fig:no-compromise-example}
}
}
\end{center}
\vspace*{-0.2in}
\end{figure}

Our first result is that the very simple example in 
Figure~\ref{fig:no-compromise-example} is in fact extremal
for undirected graphs: we show that for any undirected graph $G$ and any
internal opinions vector $s$, the price of anarchy is at most $9/8$.
We prove this by casting the question as an extremal problem
for quadratic forms, and analyzing the resulting structure using 
eigenvalues of the Laplacian matrix of $G$.
From this, we obtain a characterization of the set of graphs $G$
for which some internal opinions vector $\io$ yields a price of anarchy
of $9/8$.

We show that this bound of $9/8$ continues to hold even for
some generalizations of the model --- when nodes $i$ have different
coefficients $w_i$ on the cost terms for their internal opinions,
and when certain nodes
are ``fixed'' and simply do not modify their opinions.

\begin{figure}
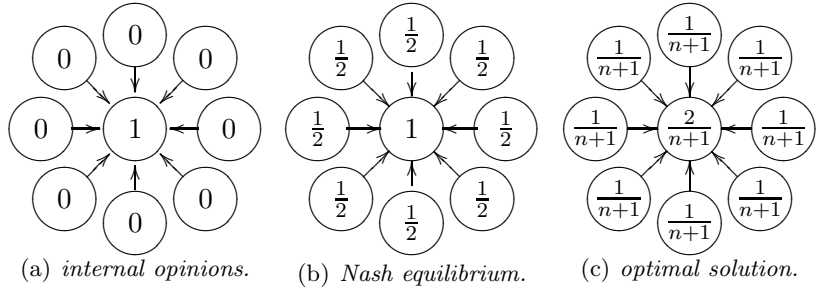

\begin{center}
\subfigure[\emph{internal opinions.}]{
\xygraph{ !{<0cm,0cm>;<1.25cm,0cm>:<0cm,1.25cm>::} !{(0.25,0.25) }*+[o]+<12pt>[F]{0}="1" !{(0,1) }*+[o]+<12pt>[F]{0}="2" !{(0.25,1.75) }*+[o]+<12pt>[F]{0}="3" !{(1,0) }*+[o]+<12pt>[F]{0}="4"
!{(1,1) }*+[o]++<9pt>[F]{1}="5" !{(1,2) }*+[o]+<12pt>[F]{0}="6" !{(1.75,0.25) }*+[o]+<12pt>[F]{0}="7"  !{(2,1) }*+[o]+<12pt>[F]{0}="8"  !{(1.75,1.75) }*+[o]+<12pt>[F]{0}="9" "1":"5" "2":"5" "3":"5" "4":"5" "6":"5" "7":"5" "8":"5" 
"9":"5" } 
 \label{fig:init-op-dir}
}
\subfigure[\emph{Nash equilibrium.}]{
  \xygraph{ !{<0cm,0cm>;<1.25cm,0cm>:<0cm,1.25cm>::} !{(0.25,0.25) }*+[o]+<9pt>[F]{\frac{1}{2}}="1" !{(0,1) }*+[o]+<9pt>[F]{\frac{1}{2}}="2" !{(0.25,1.75) }*+[o]+<9pt>[F]{\frac{1}{2}}="3" !{(1,0) }*+[o]+<9pt>[F]{\frac{1}{2}}="4"
!{(1,1) }*+[o]+<12pt>[F]{1}="5" !{(1,2) }*+[o]+<9pt>[F]{\frac{1}{2}}="6" !{(1.75,0.25) }*+[o]+<9pt>[F]{\frac{1}{2}}="7"  !{(2,1) }*+[o]+<9pt>[F]{\frac{1}{2}}="8"  !{(1.75,1.75) }*+[o]+<9pt>[F]{\frac{1}{2}}="9" "1":"5" "2":"5" 
"3":"5" "4":"5" "6":"5" "7":"5" "8":"5" "9":"5" } 
 \label{fig:nash-op-dir}
}
\subfigure[\emph{optimal solution.}]{
 \xygraph{ !{<0cm,0cm>;<1.25cm,0cm>:<0cm,1.25cm>::} !{(0.25,0.25) }*+[o]+=[F]{\frac{1}{n+1}}="1" !{(0,1) }*+[o]+=[F]{\frac{1}{n+1}}="2" !{(0.25,1.75) }*+[o]+=[F]{\frac{1}{n+1}}="3" !
{(1,0) }*+[o]+=[F]{\frac{1}{n+1}}="4"
!{(1,1) }*+[o]+=[F]{\frac{2}{n+1}}="5" !{(1,2) }*+[o]+=[F]{\frac{1}{n+1}}="6" !{(1.75,0.25) }*+[o]+=[F]{\frac{1}{n+1}}="7"  !{(2,1) }*+[o]+=[F]{\frac{1}{n+1}}="8"  !{(1.75,1.75) }*+[o]+=[F]{\frac{1}{n
+1}}="9" "1":"5" "2":"5" "3":"5" "4":"5" "6":"5" "7":"5" "8":"5" "9":"5" } 
 \label{fig:optimal-op-dir}
}
\caption{
{\small
An example demonstrating that the price of anarchy 
of a directed graph can be unbounded.
\label{fig:star-example}
}
}
\end{center}
\vspace*{-0.2in}
\end{figure}

\xhdr{Our Results: Directed Graphs}
We next consider the case in which $G$ is a directed graph; 
the form of the cost functions remains exactly the same, with
directed edges playing the role of undirected ones, but
the range of possible behaviors in the model becomes very different.
This is due to the fact that nodes can now exert a large influence over the
network without being influenced themselves.
Indeed, as Matt Jackson has observed, directed versions of repeated averaging
models can naturally incorporate ``external'' media sources;
we simply include nodes with no outgoing links, so
that they maintain their internal opinion
\cite{jackson-networks-book}.

We first show that the spectral machinery developed for 
analyzing undirected graphs can be extended to the directed case;
through an approach based on generalized eigenvalue problems we can
efficiently compute the maximum possible price of anarchy, 
over all choices of internal node opinions, on a given graph $G$.
However, in contrast to the case of undirected graphs, 
the price of anarchy can be very large in some instances;
the simple example in Figure~\ref{fig:star-example} 
shows a case in which $n-1$ nodes with internal opinion $0$
all link to a single node that has
internal opinion $1$ and no out-going edges,
producing an in-directed star.
As a result, the social cost of the Nash equilibrium is $\frac{1}{2}(n-1)$,
whereas the minimum social cost is at most $1$, since 
the player at the center of
the star could simply shift her opinion to $0$.
Intuitively, this corresponds to a type of social network in 
which the whole group pays attention to a single influential
``leader'' or ``celebrity''; this drags people's opinions
far from their internal opinions $s_i$, creating a large social cost.
Unfortunately, the leader 
is essentially unaware of the people paying attention to her,
and hence has no incentive to modify her opinion in a direction
that could greatly reduce the social cost.

In Section~\ref{sec:directed} we show that a price of anarchy 
lower-bounded by a polynomial in $n$ 
can in fact be achieved in directed
graphs of constant degree, so this behavior is not simply a consequence
of large in-degree.
It thus becomes a basic question whether there are natural classes
of directed graphs, and even bounded-degree directed graphs,
for which a constant price of anarchy is achievable.

Unweighted Eulerian directed graphs are a natural class to consider ---
first, because they generalize undirected graphs,
and second, because they capture the idea that at least at a local level
no node has an asymmetric effect on the system.
We use our framework for directed graphs to 
derive two bounds on the price of anarchy of Eulerian graphs:
For Eulerian graphs with maximum degree $\Delta$  
we obtain a bound of $\Delta+1$ on the price of anarchy. For the
subclass of Eulerian {\em asymmetric} directed
graphs\footnote{An Eulerian {\em asymmetric} directed
graph is an Eulerian graph that does not contain any pair of 
oppositely oriented edges $(i,j)$ and $(j,i)$.}
with maximum degree $\Delta$ and edge expansion $\alpha$, 
we show a bound of $O(\Delta^2 \alpha^{-2})$ on the price of anarchy.


\xhdr{Our Results: Modifying the Network}
Finally, we consider an algorithmic problem within this framework 
of opinion formation. The question is the following: if we 
have the ability to modify 
the edges in the network (subject to certain
constraints), how should we do this to reduce the social cost
of the Nash equilibrium by as much as possible?
This is a natural question both as a self-contained issue within
the mathematical framework of opinion formation, and also 
in the context of applications: many social media sites overtly
and algorithmically consider how to balance the mix of news content
\cite{agarwal-online-content,backstrom-kdd09,
munson-diverse-political,munson-sidelines} and
also the mix of social content 
\cite{backstrom-icwsm11,sun-page-fanning}
that they expose their users to, 
so as to optimize user engagement on the site.

Adding edges to reduce the social cost has an intuitive basis:
it seems natural that exposing people to others with different opinions can 
reduce the extent of disagreement within the group.
When one looks at the form of the social cost $\co(\oo)$, however,
there is something slightly counter-intuitive about the idea of
adding edges to improve the situation: the social cost is a sum of
quadratic terms, and by adding edges to $G$ we are simply adding 
further quadratic terms to the cost.  For this reason, in fact, adding edges
to $G$ can never improve the optimal social cost.
But adding edges {\em can} improve the social cost of the Nash equilibrium,
and sometimes by a significant amount --- the point is that 
adding terms to the cost function shifts the equilibrium itself,
which can sometimes more than offset the additional terms.
For example, if we add a single edge from the center of the star
in Figure~\ref{fig:star-example} to one of the leaves,
then the center will shift her opinion to $2/3$ in equilibrium,
causing all the leaves to shift their opinions to $1/3$, and
resulting in a $\Theta(n)$ improvement in the social cost.
In this case, once the leader pays attention to
even a single member of the group, the social cost improves dramatically.

We focus on three main variants on this question: when all edges
must be added {\em to} a specific node (as in the case when a site
can modify the amount of attention directed to a media source or
celebrity); when all edges must be added {\em from} a specific node
(as in the case when a particular media site tries to shift its
location in the space of opinions by blending in content from others);
and when edges can be added between any pair of nodes in the network
(as in the case when a social networking site evaluates modifications
to its feeds of content from one user to another
\cite{backstrom-icwsm11,sun-page-fanning}).

In Section~\ref{sec:adding-edges} we show that, in the previously discussed variants,
the problem of where to add edges to optimally reduce the 
social cost is NP-hard. 
But we obtain a set of positive
results as well, including a $\frac94$-approximation algorithm when
edges can be added between arbitrary pairs of nodes, and
an algorithm to find the optimal amount of additional weight to add to
a given edge.

\section{Undirected Graphs} \label{sec:undirected}
We first consider the case of undirected graphs and later handle 
the more general case of directed graphs. 
The main result in this section is a tight bound on the price
of anarchy for the opinion-formation game in undirected graphs.
After this, we 
discuss
two slight extensions to the model:
in the first, each player can put a different amount of weight 
on her internal opinion; and in the second,
each player has several fixed opinions she listens to
instead of an internal opinion. 
We show that both models can be reduced to the basic
form of the model which we study first.

For undirected graphs we can simplify the social cost to 
the following form: 
$$\co(\ao)=\sum_i (\ao_i-\io_i)^2 
  + 2 \sum_{(i,j)\in E, i>j} w_{i,j} (\ao_i-\ao_j)^2.$$
We can write this concisely in matrix form, by using
the {\em weighted Laplacian matrix} $L$ of $G$.
$L$ is defined by setting 
$L_{i,i} = \sum_{j\in N(i)} w_{i,j} $ and  $L_{i,j} = - w_{i,j}$. 
We can thus write the social cost as 
$\co(\ao) =\ao^TA\ao + ||\ao-\io||^2$, where the matrix $A=2L$ captures 
the tension on the edges.
The optimal solution is the $\oo$ minimizing $\co(\cdot)$. By taking 
derivatives, we see that the optimal solution satisfies
$(A+I)\oo = \io$. Since the Laplacian of a graph is a
positive semidefinite matrix, it follows that $A+I$ 
is positive definite.
Therefore, $(A+I)\oo = \io$ has a unique solution: $\oo =  (A+I)^{-1}
s$.

In the Nash equilibrium $\no$ each player chooses an opinion which
minimizes her cost; in terms of the derivatives of the cost functions,
this implies that $\co'_i(\no)=0$ for all $i$. 
Thus, to find the players' opinions in the Nash equilibrium we
should solve the following system of equations: $\forall i~
(\no_i-\io_i)  +\sum_{j\in N(i)} w_{i,j}(\no_i-\no_j) = 0 $. Therefore
in the Nash equilibrium each player holds an opinion which is a
weighted average of her internal opinion and the Nash equilibrium
opinions of all her neighbors. This can be succinctly written as
$(L+I)\no = (\frac{1}{2}A+I)\no = \io$.  As before $\frac{1}{2}A+I$ is a
positive definite matrix, and hence the unique Nash equilibrium is $\no =
(\frac{1}{2}A+I)^{-1} \io$.

We now begin our discussion on the price of anarchy (PoA) of the \game ---
the ratio between the cost of the optimal solution and the cost of the Nash equilibrium. 

Our main theorem is that the price of anarchy of the \game is at most
$9/8$. Before proceeding to prove the theorem we present a simple upper
bound of $2$ on the PoA for undirected graphs. To see why this holds, note that
the Nash equilibrium actually minimizes the function
$\ao^T(\frac{1}{2}A)\ao + ||\ao-\io||^2$ (one can verify that this
function's partial derivatives are the system of equations defining the
Nash equilibrium). This allows us to write
the following bound on the PoA:
\begin{eqnarray*}
PoA=\dfrac{\co(\no)}{\co(\oo)} &\leq& \dfrac{2 ( \no^T(\frac{1}{2}A)\no + ||\no-\io||^2)}{\co(\oo)} \\
&\leq& \dfrac{2 ( \oo^T(\frac{1}{2}A)\oo + ||\oo-\io||^2)}{\co(\oo)} \\
&\leq& \dfrac{2\co(\oo)}{\co(\oo)} = 2.
\end{eqnarray*}
We note that 
this bound holds only for the undirected case, 
as in the directed
case the Nash equilibrium does not minimize 
$\ao^T(\frac{1}{2}A)\ao + ||\ao-\io||^2$ anymore.

We now state the main theorem of this section.
\begin{theorem} \label{PoAlongProof}
For any graph $G$ and any internal opinions vector $\io$, the price of anarchy of the \game is at most $9/8$. 
\end{theorem}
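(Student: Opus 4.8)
The plan is to exploit the fact that the Nash equilibrium $\no = (\frac{1}{2}A+I)^{-1}\io$ and the optimum $\oo = (A+I)^{-1}\io$ are built from the same matrix $A = 2L$, so a single spectral decomposition of the Laplacian decouples the entire problem. First I would diagonalize the (symmetric, positive semidefinite) Laplacian as $L = Q\Lambda Q^T$ with $\Lambda = \mathrm{diag}(\lambda_1,\dots,\lambda_n)$, $\lambda_i \ge 0$, and pass to the rotated coordinates $\hat{\io} = Q^T\io$. Since $Q$ is orthogonal, the social cost $\co(\ao) = 2\ao^T L\ao + \|\ao-\io\|^2$ becomes a sum of independent one-dimensional contributions, $\co(\ao) = \sum_i \big(2\lambda_i \hat{\ao}_i^2 + (\hat{\ao}_i - \hat{\io}_i)^2\big)$ where $\hat{\ao} = Q^T\ao$. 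In these coordinates both the equilibrium and the optimum diagonalize coordinatewise, $\hat{\no}_i = \hat{\io}_i/(\lambda_i+1)$ and $\hat{\oo}_i = \hat{\io}_i/(2\lambda_i+1)$, since each is a rational function of $L$ applied to $\io$.

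Substituting these into the decoupled cost and simplifying each term, I expect to obtain
$$\co(\no) = \sum_i \hat{\io}_i^2\,\frac{\lambda_i(\lambda_i+2)}{(\lambda_i+1)^2}, \qquad \co(\oo) = \sum_i \hat{\io}_i^2\,\frac{2\lambda_i}{2\lambda_i+1}.$$
Both are nonnegative weighted sums with the same weights $\hat{\io}_i^2$, and the $\lambda_i = 0$ coordinates contribute zero to each. The key reduction is then the elementary mediant inequality: a ratio of sums of nonnegative quantities is at most the largest ratio of corresponding terms, so
$$\poa = \frac{\co(\no)}{\co(\oo)} \le \max_{\lambda > 0} \frac{\lambda(\lambda+2)/(\lambda+1)^2}{2\lambda/(2\lambda+1)} = \max_{\lambda \ge 0} \frac{(\lambda+2)(2\lambda+1)}{2(\lambda+1)^2},$$
where the $\lambda$ cancels in each term ratio.

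What remains is a single-variable calculus problem. Writing $f(\lambda) = \frac{(\lambda+2)(2\lambda+1)}{2(\lambda+1)^2} = 1 + \frac{\lambda}{2(\lambda+1)^2}$, the derivative of the second term has numerator proportional to $1-\lambda$, so $f$ increases on $[0,1]$ and decreases afterward, attaining its maximum at $\lambda = 1$ with $f(1) = 1 + \tfrac{1}{8} = \tfrac{9}{8}$. This gives the claimed bound. The fact that the extremum is pinned at the single eigenvalue $\lambda = 1$ should also reconcile with the extremal three-node example: there one can check that $\io$ has nonzero spectral components of $L$ only in the $\lambda = 0$ direction (which drops out of both sums) and the $\lambda = 1$ direction, so the per-term ratio is exactly $9/8$.

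The step I expect to be the real crux is the decoupling in the first paragraph --- recognizing that because $\no$ and $\oo$ are the same rational function of $L$ applied to $\io$, everything simultaneously diagonalizes and the seemingly global quadratic-form ratio collapses into a per-eigenvalue scalar ratio. Once that structural observation is in place, the mediant inequality and the elementary maximization are routine; the only care needed is to verify the coordinatewise cost simplifications and to confirm that the $\lambda_i = 0$ (constant-opinion) directions cancel out of numerator and denominator so that the maximization over $\lambda \ge 0$ is legitimate.
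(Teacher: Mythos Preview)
Your proposal is correct and follows essentially the same approach as the paper: reduce the ratio $\co(\no)/\co(\oo)$ to a per-eigenvalue scalar ratio via the spectral decomposition of the Laplacian, then optimize a one-variable rational function. The only cosmetic differences are that the paper packages the Nash and optimal costs into matrices $B$ and $C$ and proves a separate simultaneous-diagonalization lemma, whereas you pass to eigencoordinates directly; and the paper parameterizes by eigenvalues of $A=2L$ (so the maximizer is at $\lambda^A=2$) while you use eigenvalues of $L$ (maximizer at $\lambda^L=1$), which is the same thing after the substitution $\lambda^A = 2\lambda^L$.
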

\begin{proof}
The crux of the proof is relating the price of anarchy of an instance
to the eigenvalues of its Laplacian. Specifically, we 
characterize the graphs and internal opinion vectors with maximal PoA.
In these worst-case instances at least one eigenvalue of the Laplacian
is exactly $1$, and the vector of internal opinions is a 
linear combination of the eigenvectors associated with the eigenvalues $1$,
plus a possible constant shift for each connected component.
As a first step we consider two matrices $B$ and $C$ that
arise by plugging the Nash equilibrium and optimal solution we
previously computed into the cost function and applying simple
algebraic manipulations:
\begin{align*}
\co(\oo) 
&=  \io^T [ \underbrace{(A+I)^{-1} -I)^2 + (A+I)^{-1} A (A+I)^{-1}}_{B}]\io \\
\co(\no) 
 &= \io^T [ \underbrace{(L+I)^{-1} -I)^2
                        + (L+I)^{-1} A (L+I)^{-1}}_{C}] \io.
\end{align*}

Next, we show that the matrices $A,B,C$ are 
{\em simultaneously diagonalizable}: there exists an
orthogonal matrix $Q$ such that
$A=Q \Lambda^A Q^T$, $B=Q \Lambda^B Q^T$ and $C=Q \Lambda^C Q^T$, where 
for a matrix $M$ the notation
$\Lambda^M$ represents a diagonal matrix with the eigenvalues 
$\lambda_1^M, \ldots, \lambda_n^M$ of $M$ on the diagonal. 

 \begin{lemma} \label{simultaneously_diagonalizable}
$A$,$B$ and $C$ are simultaneously diagonalizable by a matrix $Q$ 
whose columns are eigenvectors of $A$.
\end{lemma}
\begin{proof}
It is a standard fact that any real symmetric matrix $M$ can be diagonalized by an orthogonal matrix $Q$ such that $M =Q\Lambda^M Q^T$. $Q$'s columns are eigenvectors of 
$M$ which are orthogonal to each other and have a norm of one. Thus in order to show that $A$, $B$ and $C$ can be diagonalized with the same matrix $Q$ it is enough to show 
that all three are symmetric and have the same eigenvectors. For this we use the
following basic fact:

\begin{quote}
{\em 
If $\lambda^N$ is an eigenvalue of $N$,  $\lambda^M$ is an eigenvalue of $M$ and $w$ is an eigenvector of both then:
\begin{enumerate}
\item $\frac{1}{\lambda}$ is an eigenvalue of $M^{-1}$ and $w$ is an eigenvector of $M^{-1}$.
\item $\lambda^N+\lambda^M$ is an eigenvalue of $N+M$ and $w$ is an eigenvector of $N+M$.
\item $\lambda^N \cdot \lambda^M$ is an eigenvalue of $NM$ and $w$ is an eigenvector of $NM$.
\end{enumerate}
}
\end{quote}

From this we can show that any eigenvector of $A$ is also an eigenvector of $B$ and $C$. 
Recall that $A$ is a symmetric matrix, thus, it has $n$ orthogonal eigenvectors which 
implies that 
$A$,$B$ and $C$ are all symmetric and share the same basis of eigenvectors. 
Therefore $A$,$B$ and $C$ are simultaneously diagonalizable.
\end{proof}

We can now express the PoA as a function of the eigenvalues of $B$ and $C$. 
By defining $\io'=Q^T\io$ we have:
\begin{eqnarray*}
PoA &=&\dfrac{\co(\no)}{\co(\oo)} =\dfrac{\io^T C \io}{\io^T B \io}=\dfrac{ \io^T Q \Lambda^C Q^T  \io}{ \io^T Q \Lambda^B Q^T  \io} \\
&=& \dfrac{\io'^T \Lambda^C \io'}{\io'^T \Lambda^B \io'} = \dfrac{\sum_{i=1}^n \lambda_i^C {\io'_i}^2}{\sum_{i=1}^n \lambda_i^B {\io'_i}^2} \leq \max_i \dfrac {\lambda_i^C} 
{\lambda_i^B}
\end{eqnarray*}

The final step of the proof consists of expressing $\lambda_i^C$ and $\lambda_i^B$ as functions of the eigenvalues of $A$ (denoted by $\lambda_i$) and 
finding the value for $\lambda_i$ 
maximizing the ratio between $\lambda_i^C$ and $\lambda_i^B$.
 \begin{lemma}
$\max_i \dfrac {\lambda_i^C} {\lambda_i^B}\leq 9/8$.
The bound is tight if and only if
there exists an $i$ such that $\lambda_i=2$. 
\label{lem:eval-ratio}
 \end{lemma}
 \begin{proof} 
Using the basic facts about eigenvalues which were mentioned in the proof of Lemma \ref{simultaneously_diagonalizable}, we get:
\begin{align*} 
  \lambda^B_i &= \left( 1-\frac{1}{\lambda_i+1} \right)^2+
  \frac{1}{\lambda_i+1} \lambda_i \frac{1}{\lambda_i+1} \\
&= \frac{\lambda_i^2}{(\lambda_i+1)^2}+\frac{\lambda_i}{(\lambda_i+1)^2} 
= \frac{\lambda_i^2+\lambda_i}{(\lambda_i+1)^2} 
= \frac{\lambda_i}{(\lambda_i+1)}
\\
  \lambda^C_i &
= \left(1-\frac{1}{\frac{1}{2} \lambda_i+1}\right)^2+
  \frac{1}{\frac{1}{2}\lambda_i+1} \lambda_i \frac{1}{\frac{1}{2}\lambda_i+1} \\
&= \frac{\lambda_i^2}{(\lambda_i+2)^2}+\frac{4\lambda_i}{(\lambda_i+2)^2} 
= \frac{\lambda_i^2+4\lambda_i}{(\lambda_i+2)^2}.
\end{align*}

We can now write $\lambda_i^C/\lambda_i^B = \phi(\lambda_i)$, 
where $\phi$ is a simple rational function:
\begin{align*} 
  \phi(\lambda) 
  &= \frac{(\lambda^2+4\lambda)/(\lambda+2)^2}
         {\lambda/(\lambda+1)}
  = \frac{(\lambda^2+4\lambda)(\lambda+1)}
         {(\lambda+2)^2 \lambda} \\
  &= \frac{(\lambda+4)(\lambda+1)}
         {(\lambda+2)^2} 
  = \frac{\lambda^2+5\lambda+4}{\lambda^2+4\lambda+4}.
\end{align*} 

By taking the derivative of $\phi$, we find that $\phi$ is maximized over all
$\lambda \geq 0$ at $\lambda = 2$ and $\phi(2) = 9/8$.

The eigenvalues $\lambda_i$ are all non-negative, so 
it is always true that $\max_i \phi(\lambda_i) \leq 9/8$.
If $2$ is an eigenvalue of $A$
(and hence $1$ is an eigenvalue 
of the Laplacian) 
then there exists an internal opinions vector $\io$ for which the 
PoA is $9/8$. 

What is the internal opinions vector maximizing the PoA? To find it assume that the $i$th eigenvalue of the Laplacian equals $1$.
To get a PoA of $9/8$ we should choose $s'_i=1$ and $\forall j\neq i~s'_j=0$ to hit only $\lambda_i$. By definition $\io'=Q^T\io$, 
and hence $\io=(Q^T)^{-1}\io'$. 
Because $Q$ is orthogonal, $Q^T = Q^{-1}$; thus, $\io = Q\io' = v_i$,
where $v_i$ is the eigenvector associated with $\lambda_i$.
In fact, any linear combination of the 
eigenvectors associated with eigenvalues $0$ and $1$ where at least one of 
the eigenvectors of $1$ has a nonzero coefficient will obtain the 
maximal PoA.
\end{proof}

With Lemma \ref{lem:eval-ratio}, we have completed the proof of
Theorem \ref{PoAlongProof}.
\end{proof}

\begin{corollary} \label{scaling}
We can scale the weights of any graph to make its PoA be $9/8$. 
If $\alpha$ is the scaling factor for the weights,
then the eigenvalues of the scaled
$A$ matrix are $\alpha \lambda_i$. Therefore by choosing $\alpha =
\frac{2}{\lambda_i}$ for any eigenvalue other than $0$ we get that
there exists an internal opinions vector for which the PoA is $9/8$.
\end{corollary}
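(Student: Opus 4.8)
The plan is to reduce the claim directly to the tightness condition already established in Lemma~\ref{lem:eval-ratio}: a price of anarchy of $9/8$ is achievable exactly when the matrix $A$ has an eigenvalue equal to $2$ (equivalently, when the Laplacian $L$ has an eigenvalue equal to $1$). Since rescaling the edge weights is a very controlled operation, the entire argument comes down to showing that such a rescaling can place an eigenvalue of $A$ at the value $2$.

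First I would record how the relevant matrices transform under scaling. Multiplying every edge weight $w_{i,j}$ by a common factor $\alpha > 0$ sends $L$ to $\alpha L$, because each diagonal entry $L_{i,i} = \sum_{j} w_{i,j}$ and each off-diagonal entry $L_{i,j} = -w_{i,j}$ is linear in the weights. Consequently $A = 2L$ becomes $\alpha A$, each eigenvalue $\lambda_i$ of $A$ becomes $\alpha \lambda_i$, and the eigenvectors are unchanged.

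Next I would choose the scaling factor. The Laplacian of a graph containing at least one edge is a nonzero positive semidefinite matrix, so $A$ has at least one strictly positive eigenvalue $\lambda_i$ (its trace $\sum_i L_{i,i}$ is positive, while all eigenvalues are nonnegative). Fix such a $\lambda_i$ and set $\alpha = 2/\lambda_i > 0$. Under this scaling the corresponding eigenvalue of the rescaled $A$ is exactly $\alpha \lambda_i = 2$, and the positivity of $\alpha$ guarantees the rescaled weights remain nonnegative, so we still have a valid weighted graph.

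Having produced an instance in which $A$ has eigenvalue $2$, I would invoke the tightness half of Lemma~\ref{lem:eval-ratio}: there is then an internal opinions vector $\io$ --- concretely, the eigenvector associated with that eigenvalue --- for which $\phi(2) = 9/8$ is attained, so the price of anarchy equals $9/8$. The one point requiring care is the existence of a nonzero eigenvalue of $A$, which is precisely where the tacit assumption that $G$ actually contains an edge enters; this is the only genuine obstacle, since if $G$ had no edges then $A = 0$ and no scaling could create the eigenvalue $2$. Modulo that degenerate case, the corollary is an immediate consequence of the homogeneity of $L$ in the weights together with the tightness criterion of Lemma~\ref{lem:eval-ratio}.
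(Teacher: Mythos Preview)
Your proposal is correct and follows essentially the same approach as the paper: the corollary in the paper is stated with its proof embedded in the statement, and you have simply fleshed out those same steps --- the linearity of $L$ (and hence $A$) in the weights, the choice $\alpha = 2/\lambda_i$ for a nonzero eigenvalue, and the appeal to the tightness direction of Lemma~\ref{lem:eval-ratio}. Your added remarks on the positivity of $\alpha$ and the degenerate edgeless case are reasonable clarifications but do not depart from the paper's argument.
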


\subsection{Arbitrary Node Weights and Players with Fixed Opinions } 
\label{subsec:fixed}

Our first extension is a model in which different people put different
weights on their internal opinion. In this extension, each
node in the graph has a strictly positive weight $w_i$ and the cost
function is:
$\displaystyle{\co(\ao)=\sum_i [ w_i(\ao_i-\io_i)^2+\sum_{j \in N(i)} w_{i,j}(\ao_i-\ao_j)^2 }]$.
The bound of $9/8$ on the PoA holds even in this model. To see this,
let $w$ be the vector of node weights and $\diag(w)$ be a diagonal
matrix with the values of $w$ on the diagonal. In terms of the scaled
variables
$\hat {\ao} = \diag(\sqrt{w}\,) \ao$, 
$\hat {\io} = \diag(\sqrt{w}\,) \io$
and the scaled matrix
$\hat {A} = \diag(\sqrt{w}\,)^{-1} A  \diag(\sqrt{w}\,)^{-1}$,
the cost takes the same form as before:
$\co(\hat \ao)=\|\hat \ao-\hat \io\|^2 +\hat {\ao}^T \hat {A} \hat {\ao} $. 
We have therefore proved:
\begin{claim}
The PoA of the game with arbitrary strictly positive node weights is bounded by $9/8$.
\label{nodeWeightsPositive}
\end{claim}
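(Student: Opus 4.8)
The plan is to reduce the node-weighted game to the basic model through the diagonal change of variables already recorded above, and then apply the $9/8$ bound of Theorem~\ref{PoAlongProof}. Writing $D = \diag(\sqrt{w}\,)$, the substitution $\hat{\ao} = D\ao$, $\hat{\io} = D\io$, $\hat{A} = D^{-1}AD^{-1}$ has been shown to turn the social cost into the basic form $\co(\hat\ao) = \|\hat\ao - \hat\io\|^2 + \hat\ao^T \hat A \hat\ao$, and this is an identity of cost values under the substitution. Since $D$ is an invertible linear map, minimizing the cost over $\ao$ is the same as minimizing it over $\hat\ao$, so the optimum transforms to $\hat\oo = D\oo$, satisfies the basic-model condition $(\hat A + I)\hat\oo = \hat\io$, and preserves its cost. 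It therefore suffices to show (i) that the Nash point transforms correctly, and (ii) that the $9/8$ bound is applicable to the matrix $\hat A$.

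For (i) I would work directly with the equilibrium equations rather than re-deriving best responses in the new coordinates. The per-player first-order conditions of the weighted game read $w_i(\no_i - \io_i) + \sum_{j \in N(i)} w_{i,j}(\no_i - \no_j) = 0$, which in matrix form is $(\diag(w) + L)\no = \diag(w)\io$, where $L$ is the weighted Laplacian and $A = 2L$. Left-multiplying by $D^{-1}$ and inserting $D^{-1}D = I$ turns this into $(\tfrac12 \hat A + I)\hat\no = \hat\io$ with $\hat\no = D\no$ --- exactly the basic-model Nash condition for $\hat A$. Thus both the optimum and the equilibrium of the weighted instance correspond, under the cost-preserving map $D$, to the basic-model optimum and equilibrium of the instance $(\hat A, \hat\io)$, and the price of anarchy is unchanged: $\poa = \co(\no)/\co(\oo) = \co(\hat\no)/\co(\hat\oo)$.

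The main obstacle, and the one place where the reduction is not purely mechanical, is step (ii): the matrix $\hat A = D^{-1}(2L)D^{-1}$ is in general \emph{not} the Laplacian of any graph (its diagonal no longer equals the row sum of the off-diagonal magnitudes once the weights $w_i$ differ), so Theorem~\ref{PoAlongProof} cannot be invoked verbatim as a statement about graphs. The resolution is to observe that the proof of that theorem never uses the Laplacian structure of $A$ beyond the facts that $A$ is symmetric with non-negative eigenvalues: the matrices $B$ and $C$ are built as rational functions of $A$, so Lemma~\ref{simultaneously_diagonalizable} applies to any symmetric $A$, and Lemma~\ref{lem:eval-ratio} only needs $\lambda_i \ge 0$ to conclude $\phi(\lambda_i) \le 9/8$. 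Now $\hat A = D^{-1} A (D^{-1})^T$ is a congruence of the positive semidefinite matrix $A$ by the symmetric matrix $D^{-1}$, hence is itself symmetric and positive semidefinite, so all eigenvalues of $\hat A$ are non-negative. The identical eigenvalue analysis then yields $\poa \le 9/8$ for $(\hat A, \hat\io)$, and by the correspondence above the same bound holds for the original weighted instance. I would flag that the degenerate components where an eigenvalue vanishes are handled exactly as in the unweighted proof, since there the corresponding $\lambda_i^B$ and $\lambda_i^C$ both vanish and drop out of the ratio.
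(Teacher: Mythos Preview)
Your proof is correct and follows the same change-of-variables approach as the paper, which simply records the substitution $\hat{\ao} = \diag(\sqrt{w}\,)\ao$, $\hat{\io} = \diag(\sqrt{w}\,)\io$, $\hat{A} = \diag(\sqrt{w}\,)^{-1} A \diag(\sqrt{w}\,)^{-1}$ and observes that the cost regains its basic form. You are in fact more careful than the paper in two respects: you verify explicitly that the Nash first-order conditions transform to $(\tfrac12\hat A + I)\hat\no = \hat\io$, and you flag and resolve the issue that $\hat A$ is generally not a graph Laplacian by noting that the eigenvalue argument in Theorem~\ref{PoAlongProof} requires only that $A$ be symmetric positive semidefinite --- a point the paper leaves implicit.
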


Next we show how to handle the case in which a subset of the players may
have node weights of $0$, which can equivalently be viewed as
a set of players who have no internal opinion at all.
We analyze this by first considering the case in which all non-zero
node weights are the same; for this case we prove: 
\begin{lemma} \label{lem:nodeWeightsZero}
If every player has either weight $1$ or $0$ on her internal opinion, then the PoA is bounded by $9/8$.
\end{lemma}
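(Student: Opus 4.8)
The plan is to obtain the bound by a perturbation argument that reduces the $\{0,1\}$-weight model to the strictly-positive-weight model already handled in Claim~\ref{nodeWeightsPositive}. First I would record the matrix form of the instance. Writing $D=\diag(w)$ for the $0/1$ diagonal matrix of node weights and keeping $A=2L$ as before, the social cost is $\co(\ao)=\ao^TA\ao+(\ao-\io)^TD(\ao-\io)$, and setting gradients to zero shows that the optimum $\oo$ solves $(A+D)\oo=D\io$ while the Nash equilibrium $\no$ solves $(\tfrac12 A+D)\no=D\io$. (For a node with $w_i=0$ the value of $\io_i$ is irrelevant, so we may take it to be $0$.) Note that the clean proof of Theorem~\ref{PoAlongProof} does not transfer directly: there $A$ and the identity trivially commute, whereas here $A$ and the projection $D$ need not, so the analogues of $B$ and $C$ are no longer simultaneously diagonalizable; and the rescaling trick of Claim~\ref{nodeWeightsPositive} is unavailable because it would divide by a zero weight.

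Instead I would approximate the zero weights by small positive ones. Define $D_\eps=D+\eps(I-D)$, i.e.\ give every weight-$0$ node a weight of $\eps>0$ and leave the weight-$1$ nodes unchanged; all node weights are now strictly positive, so Claim~\ref{nodeWeightsPositive} applies and the price of anarchy of the perturbed game is at most $9/8$ for every $\eps>0$. Writing $\co_\eps$, $\oo_\eps=(A+D_\eps)^{-1}D_\eps\io$ and $\no_\eps=(\tfrac12 A+D_\eps)^{-1}D_\eps\io$ for the perturbed cost and its two solutions, the goal is to let $\eps\to 0$ and argue that the $9/8$ bound survives in the limit.

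The remaining work is a continuity argument. Provided $A+D$ and $\tfrac12 A+D$ are invertible, the maps $\eps\mapsto\oo_\eps$ and $\eps\mapsto\no_\eps$ are continuous at $\eps=0$ (the matrix inverse is continuous where it is defined), and since $\co_\eps(\ao)$ is jointly continuous in $(\eps,\ao)$ we get $\co_\eps(\oo_\eps)\to\co(\oo)$ and $\co_\eps(\no_\eps)\to\co(\no)$. Hence $\poa=\co(\no)/\co(\oo)=\lim_{\eps\to0}\co_\eps(\no_\eps)/\co_\eps(\oo_\eps)\le 9/8$.

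The main obstacle is exactly the invertibility needed to pass to this limit, i.e.\ making sure the $\eps=0$ game is nondegenerate. Since $A=2L$ is positive semidefinite with null space spanned by the per-component constant vectors, and $D$ is positive semidefinite with null space the vectors supported on the weight-$0$ nodes, $A+D$ (and likewise $\tfrac12 A+D$) is positive definite exactly when no connected component consists entirely of weight-$0$ nodes. I would dispose of any all-zero component separately: on such a component there is no internal-opinion anchor, the edge cost is minimized by any constant assignment, and both the optimum and every equilibrium there contribute $0$ to the social cost, so such components can be deleted without changing $\poa$. After removing them, each surviving component contains a weight-$1$ node, both matrices are positive definite, the continuity argument goes through, and (excluding the trivial case $\io=0$, where both costs vanish) the quotient is well defined and bounded by $9/8$.
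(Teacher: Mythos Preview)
Your perturbation argument is correct and in fact more solid than the paper's own proof of this lemma. The paper tries to extend the simultaneous-diagonalization argument of Theorem~\ref{PoAlongProof} directly: it rewrites the cost as $\|\ao-\io\|^2+\ao^T(A+R-I)\ao$ for the $0/1$ diagonal matrix $R$ (your $D$), derives $\oo=(A+R)^{-1}\io$ and $\no=(\tfrac12A+R)^{-1}\io$, and then asserts that $(A+R-I)$, $(A+R)$ and $(\tfrac12A+R)$ are simultaneously diagonalizable ``since any real vector is an eigenvector of $R$.'' That justification is simply false once $R$ is a proper projection ($Z\neq\emptyset$ and $Z\neq V$): $A$ and $R$ need not commute, exactly the obstacle you identify. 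Your route avoids this altogether---replace the zero weights by $\eps>0$, invoke Claim~\ref{nodeWeightsPositive}, and pass to the limit via continuity of $(A+D_\eps)^{-1}$ and $(\tfrac12A+D_\eps)^{-1}$ at $\eps=0$, having first peeled off any all-zero component so that $A+D$ and $\tfrac12A+D$ are positive definite. What your approach buys is a rigorous reduction to the already-proved strictly-positive case; what the paper's intended approach would buy, if it worked, is a direct spectral description of the worst case, but as written it does not go through.

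Your handling of the degenerate cases is adequate: on an all-zero component every Nash equilibrium and the optimum are constant vectors with zero edge cost and no internal-opinion cost, so removing such components leaves the PoA unchanged; and on the remainder the quotient $\co(\no)/\co(\oo)$ is well defined unless every weight-$1$ node in a component shares the same internal opinion, in which case both costs vanish on that component as well. You might tighten the last sentence to cover this (rather than only the case $\io=0$), but the argument stands.
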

\begin{proof}
Let $Z$ be the set of players who do not have an internal opinion. We define the following diagonal matrix $R$: $R_{i,i}=0$ for  $i \in Z$ and $R_{j,j}=1$ for $j \notin Z$. We 
assume 
without loss of generality that $Z \neq V$ since otherwise the PoA is 1. We can also assume without loss of generality that in the instance which maximizes the PoA each $i
\in Z$ has an internal opinion of $0$. Therefore we can express the social cost as $\co(\ao)=|| \ao-\io||^2 + \ao^T (A+R-I) \ao$. Since the cost associated with all $i \notin Z$ 
remains the same while for $i\in Z$ the cost of $|| \ao_i-\io_i||^2 = \ao_i^2$ is countered by the $-\ao_i^2$ from the $i^{th}$ row of $\ao^T(R-I)\ao$. Similar to before we have that 
the optimal solution is 
$\oo=(A+R)^{-1}\io$ and the Nash equilibrium is $\no=(\frac{1}{2}A+R)^{-1}\io$. Since any real vector is an eigenvector of $R$ we have that $(A+R-I)$, $(A+R)$ and $(\frac{1}{2}A+R)$ are 
simultaneously diagonalizable and therefore the same steps 
we took to prove Theorem \ref{PoAlongProof} lead us to get a bound 
of $9/8$ for the PoA.
\end{proof}

By applying the change of variables from Claim \ref{nodeWeightsPositive} we can also handle non-zero arbitrary weights.

In the second model we present, some nodes have {\em fixed opinions}
and others do not have an internal opinion at all.
We partition the nodes into two sets $A$ and $B$.  Nodes in $B$ are completely
fixed in their opinion and are non-strategic, while nodes in $A$
have no internal opinion -- they simply want to choose an opinion
that minimizes their disagreement with their neighbors
(which may include a mix of nodes in $A$ and $B$). 
We can think of nodes in $A$ as people forming their opinion and
of nodes in $B$ as news sources
with a specific \emph{fixed} orientation. 
We denote the fixed opinion of a node $j \in B$ by $\io_j$.
The social cost for this model is:
$$\co(\ao)= \sum_{\substack{(i,j) \in E; \\  i \in A; j \in B}} (\ao_i-\io_j)^2+2\sum_{\substack{(i,j) \in E; \\ i,j \in A; i>j}}  (\ao_i-\ao_j)^2.$$

Note that this clearly generalizes the original model, since we can
construct a distinct node in $B$ to represent each internal opinion.
Next, we perform the reduction in the opposite direction, reducing
this model to the basic model. 
To do this, we assign each node an internal opinion equal to the weighted
average of the opinions of her fixed neighbors, and a weight equal to the
sum of her fixed neighbors' weights.
We then show that the PoA of the fixed opinion model is bounded by the PoA of the basic model and thus get:
\begin{proposition}
The PoA of the fixed opinion model is at most $9/8$.
\label{prop:fixedOpinions}
\end{proposition}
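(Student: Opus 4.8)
The plan is to reduce the fixed-opinion model to the node-weighted basic model, for which the $9/8$ bound is already established in Claim~\ref{nodeWeightsPositive} and Lemma~\ref{lem:nodeWeightsZero}. For each strategic node $i \in A$ I would let $w_i = \sum_{j \in B \cap N(i)} w_{i,j}$ be the total weight of its fixed neighbors and set its internal opinion to the weighted average $\io_i = \frac{1}{w_i}\sum_{j \in B \cap N(i)} w_{i,j}\io_j$ (with $w_i = 0$ and no internal opinion when $i$ has no fixed neighbor). The first step is a bias--variance identity: expanding the square and substituting these definitions gives, for each $i \in A$,
$$\sum_{j \in B \cap N(i)} w_{i,j}(\ao_i - \io_j)^2 = w_i(\ao_i - \io_i)^2 + \kappa_i,$$
where $\kappa_i = \sum_{j \in B \cap N(i)} w_{i,j}\io_j^2 - w_i\io_i^2 \geq 0$ is the weighted variance of the fixed neighbors' opinions and, crucially, does not depend on $\ao$. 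Since the edge-tension terms among nodes of $A$ carry over verbatim, summing over $i \in A$ shows that the fixed-opinion cost equals the cost of the reduced basic instance plus a single nonnegative constant: $\co_{\mathrm{fix}}(\ao) = \co_{\mathrm{basic}}(\ao) + K$ with $K = \sum_{i \in A}\kappa_i \geq 0$.

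The second step records that this additive constant leaves both solution concepts in place. Each player's cost differs from its basic-model counterpart only by a constant, so every best response is unchanged and the unique Nash equilibrium $\no$ of the fixed model coincides with that of the reduced basic instance; likewise the social optimum $\oo$ is unchanged, since adding a constant to the objective does not move its minimizer.

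The final and most delicate step compares the two price-of-anarchy ratios. Writing $a = \co_{\mathrm{basic}}(\no)$ and $b = \co_{\mathrm{basic}}(\oo)$, we have $a \geq b > 0$ (the equilibrium cost is at least the optimal cost) and $K \geq 0$, so
$$\poa_{\mathrm{fix}} = \frac{a + K}{b + K} \leq \frac{a}{b} = \poa_{\mathrm{basic}} \leq \frac{9}{8},$$
where the first inequality is the elementary fact that adding the same nonnegative quantity to the numerator and denominator of a ratio that is at least $1$ can only move it toward $1$, and the last inequality is the $9/8$ bound for the node-weighted model, with Lemma~\ref{lem:nodeWeightsZero} accommodating the weight-$0$ nodes that arise when some $i \in A$ has no fixed neighbor. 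The conceptual crux is exactly this last inequality: the irreducible variance $\kappa_i$ among a node's fixed neighbors is paid identically at equilibrium and at optimum, so rather than inflating the PoA it can only dampen it, which is precisely why the bound survives the reduction. The only points requiring care are verifying the bias--variance identity and confirming that the reduced instance is a legitimate (possibly weight-$0$) node-weighted instance so that the earlier lemmas apply.
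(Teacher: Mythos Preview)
Your proposal is correct and follows essentially the same route as the paper: the same reduction (weight $w_i=\sum_{j\in N_B(i)} w_{i,j}$ and internal opinion the weighted average of fixed neighbors), the same identity $\co_G(\ao)=\co_{G'}(\ao)+K$ with $K\ge 0$ constant in $\ao$, and the same monotonicity argument $\frac{a+K}{b+K}\le\frac{a}{b}\le\frac{9}{8}$. Your framing of the constant as a bias--variance decomposition is a slightly cleaner way of stating what the paper proves in its Claim~\ref{clm-bounded-cost}.
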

\begin{proof}
We reduce an instance of the fixed opinion game to an instance of the opinion game with arbitrary node weights as follows:  
We define the internal opinion of every player $i \in A$ that has at least one neighbor in $B$ to be a 
weighted average of the opinions $i$'s neighbors in $B$: 
$\io_i=\dfrac {\sum_{j\in N_B(i)} w_{i,j} \io_j } {\sum_{j\in N_B(i)} w_{i,j}}$, where $N_B(i)$ is the set of $i$'s neighbors in $B$. We also define node $i$'s weight to be $w_i=\sum_{j\in N_B(i)} 
w_{i,j}$. 
For a player $i \in A$ who does not have any neighbors in $B$ we simply define $w_i=0$.
We use $G$ 
to denote the initial instance, and $G'$ to denote the instance produced by
the reduction.
Let $\no$ be the Nash equilibrium in $G$; then $\no$ is also the
Nash equilibrium in $G'$. To see this, recall that in a Nash
equilibrium each player's opinion is the weighed average of the opinions
of all neighbors. Thus, 
$$\no_i=\dfrac{\sum_{j \in N_B(i)}
w_{i,j}s_j  +\sum_{j \in  N_A(i)} w_{i,j}y_j  }{\sum_{j \in  N_B(i)}
w_{i,j}+\sum_{j \in  N_A(i)} w_{i,j}}=\dfrac{  (\sum_{j \in  N_B(i)} w_{i,j}) \frac {\sum_{j\in N_B(i)}
w_{i,j} \io_j } {\sum_{j\in N_B(i)} w_{i,j}} +\sum_{j \in  N_A(i)}
w_{i,j}y_j}{\sum_{j \in  N_B(i)}
w_{i,j}+\sum_{j \in  N_A(i)} w_{i,j}}.$$

In Claim \ref{clm-bounded-cost} below we show that $\co_G(\ao)=\co_{G'}(\ao)+c$ for a positive constant $c$, hence, the optimal solution for $G$ and $G'$ is the same.
Let $\oo$ be this optimal solution and let $\no$ be $G$'s and $G'$'s Nash equilibrium. By deriving the following bound we conclude the proof:
$$ PoA(G)=\dfrac{\co_G(\no)}{\co_G(\oo)} \leq \dfrac {\co_{G'}(\no)+c} {\co_{G'}(\oo)+c} \leq   \dfrac {\co_{G'}(\no)} {\co_{G'}(\oo)}  \leq \dfrac {9}{8}.$$

\begin{claim} \label{clm-bounded-cost}
$\co_G(\ao)=\co_{G'}(\ao)+c$ where $c$ is a positive constant.
\end{claim}
\begin{proof}
We show that $\co_G(\ao) \geq \co_{G'}(\ao)$ and $\co_G(\ao)-\co_{G'}(\ao)$ is constant. Observe that the only terms 
where the two costs differ are ones associated with the cost of the 
fixed opinions in $G$ and the internal opinions in $G'$. 
Thus, it is enough to show that for every player $i$:
$$\sum_{j\in N_B(i)} w_{i,j}(\io_j-\ao_i)^2  \geq  \left(\sum_{j\in N_B(i)} w_{i,j}\right) \cdot  \left(\dfrac {\sum_{j\in N_B(i)} w_{i,j} \io_j} {\sum_{j\in N_B(i)} w_{i,j}} -\ao_i\right)^2 .$$
By arranging the terms we get that the terms involving $z_i$'s cancel out, hence what left to show is that:  
$\sum_{j\in N_B(i)} w_{i,j}\io_j^2  \geq  \dfrac {(\sum_{j\in N_B(i)} w_{i,j} \io_j)^2 } {\sum_{j\in N_B(i)} w_{i,j}} $.
The claim follows from the following computation:
\begin{eqnarray*}
\left(\sum_{j\in N_B(i)} w_{i,j} \io_j\right)^2 &=&  \sum_{j\in N_B(i)}  w_{i,j}^2 \io_j^2 + \sum_{j,k\in N_B(i), j\geq k}  2w_{i,j}w_{i,k} \io_j \io_k \\
&\leq&   \sum_{j\in N_B(i)}  w_{i,j}^2 \io_j^2 + \sum_{j,k\in N_B(i), j\geq k}  w_{i,j}w_{i,k} (\io_j^2+ \io_k^2) \\
&=& \left(\sum_{j\in N_B(i)} w_{i,j}\right) \left(\sum_{j\in N_B(i)} w_{i,j}\io_j^2\right).
\end{eqnarray*}
\end{proof}
\end{proof}
\section{Directed Graphs} \label{sec:directed}

We begin our discussion of directed graphs with an example showing
that the price of anarchy can be unbounded even for graphs with 
bounded degrees.
Our main result in this section is that we can nevertheless
develop spectral methods extending those in Section~\ref{sec:undirected}
to find internal opinions that maximize the PoA for a given graph.
Using this approach, we identify classes of directed graphs with
good PoA bounds.

In the introduction we have seen that the PoA of an in-directed star
can be unbounded. As a first question, we ask whether this is
solely a consequence of the unbounded maximum in-degree of this graph,
or whether it is
possible to have an unbounded PoA for a graph with bounded degrees.
Our next example shows that one can obtain a large PoA even when all
degrees are bounded: we show that the PoA of a 
bounded degree tree can be $\Theta(n^c)$, where $c\leq1$ 
is a constant depending on the in-degrees of the nodes in the tree. 
\begin{example}
Let $G$ be a $2^k$-ary tree of depth $\log_{2^k} n$ in which the
internal opinion of the root is $1$ and the internal opinion of every other node
is $0$. All edges are directed toward the root. In the Nash
equilibrium all nodes at layer $i$ hold the same opinion, which is
$2^{-i}$. (The root is defined to be at layer $0$.) The cost of a
node at layer $i$ is $2\cdot 2^{-2i}$. Since there are $2^{ik}$ nodes
at layer $i$,
the total social cost at Nash equilibrium is
$\displaystyle{\sum_{i=1}^{\log_{2^k} n} 2^{ik}2^{1-2i} =
2\sum_{i=1}^{\log_{2^k} n} 2^{(k-2)i}.}$
For $k>2$ this cost is 
$\displaystyle{2^{k-1} \dfrac{(2^{k-2})^{\log_{2^k} n}-1}{2^{k-2}-1} =2^{k-1} \dfrac{n^\frac{k-2}{k}-1}{2^{k-2}-1}}.$
The cost of the optimal solution is at most $1$; in fact it is very
close to $1$, since in order to reduce the cost the root should hold
an opinion of $\epsilon$ very close to $0$, which makes the root's
cost approximately $1$. Therefore the PoA is $\Theta(n^\frac{k-2}{k})$. 
It is instructive to consider the PoA for extreme values of $k$. 
For $k=2$, the PoA is $\Theta(\log n)$, while for $k = \log n$
we recover the in-directed star from the introduction where the PoA 
is $\Theta(n)$.
For intermediate values of $k$, the PoA is $\Theta(n^c)$.
For example, for $k=3$ we get that the PoA
is $\Theta(n^\frac{1}{3})$.  
\end{example}

For directed graphs we do not consider the generalization to arbitrary node
weights (along the lines of Section \ref{subsec:fixed}), noting instead
that introducing 
node weights to directed graphs can have a severe effect on the PoA. 
That is, even in graphs containing only two nodes, introducing
arbitrary node weights can make the PoA unbounded. 
For example, consider a graph with two
nodes $i$ and $j$. Node $i$ has an internal opinion of $0$ and a node weight
of $1$, while node $j$ has an internal opinion of $1$ and a node weight of
$\epsilon$. There is a directed edge $(i,j)$ with weight $1$. 
There cost of the Nash equilibrium is $1/2$, but the social cost of the
optimal solution is smaller than $\epsilon$. 
To avoid this pathology, from now on we restrict our attention to uniform node weights.

\subsection{The Price of Anarchy in a General Graph}
For directed graphs the cost of the optimal solution and the
cost of the Nash equilibrium are respectively $c(\oo) = \io^T B \io$
and $c(\no) = \io^T C \io$, as before. But now, $C$ has a slightly more
complicated form since $L$ is no longer a symmetric matrix. 
Recall that matrix $A$ is used in the social cost function 
to capture the cost associated with the edges
of the graph (disagreement between neighbors). 
We define it for directed graph by setting
$A_{i,j} = -w_{i,j}-w_{j,i}$ for $i \neq j$ and
$A_{i,i} = \sum_{j \in N(i)} w_{i,j} + \sum_{\{j|i \in N(j)\}} w_{j,i}$.
The matrix $A$ is the weighted Laplacian for an undirected graph
where the weight on the undirected edge $(i,j)$ is the sum of the
weights in the directed graph for edges $(i,j)$ and $(j,i)$.
We then define
$C = \left( (L+I)^{-1} -I \right)^T \left( (L+I)^{-1} -I \right) + (L+I)^{-T} A (L+I)^{-1}.$ 
The price of anarchy, therefore, is $\dfrac{\io^T C \io}{\io^T B \io}$ as before. 
The primary distinction between the price of anarchy in the directed
and undirected cases is that in the undirected case, $B$ and $C$ are
both rational functions of $A$. In the directed case, no such simple
relation exists between $B$ and $C$, so that we cannot easily bound
the generalized eigenvalues for the pair (and hence the price of
anarchy) for arbitrary graphs. However, given a directed graph 
our main theorem shows that we can always find the vector of internal 
opinions $\io$ yielding the maximum PoA:
\begin{theorem}
Given a graph $G$ it is possible to find the internal opinions vector $\io$ yielding the maximum PoA up to a precision of $\epsilon$ in polynomial time.
\end{theorem}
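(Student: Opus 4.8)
The plan is to recognize that, for a fixed graph $G$, maximizing the price of anarchy $\io^T C \io / \io^T B \io$ over all internal-opinion vectors $\io$ is exactly a generalized eigenvalue problem for the symmetric matrix pair $(C, B)$. First I would record the structural facts that make this well posed. Both matrices are symmetric: $B$ is the same rational function of the symmetric matrix $A$ as in the undirected case, and although the directed Laplacian $L$ is not symmetric, every summand of $C = ((L+I)^{-1}-I)^T((L+I)^{-1}-I) + (L+I)^{-T}A(L+I)^{-1}$ has the form $M^TM$ or $N^TAN$ with $A$ symmetric, so $C$ is symmetric too. Moreover $B$ is positive semidefinite, with $\mathrm{null}(B) = \mathrm{null}(A)$ spanned by the indicator vectors of the connected components of the underlying undirected graph; this follows from the eigenvalue identity $\lambda^B = \lambda/(\lambda+1)$ derived in Lemma~\ref{lem:eval-ratio}.

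The one genuine subtlety is that $B$ is singular, so I would first dispose of its null space. For any component-indicator vector $\io$ one checks that $L\io = 0$, hence $(L+I)^{-1}\io = \io$ and $((L+I)^{-1}-I)\io = 0$, while $A\io = 0$; together these give $C\io = 0$. Thus $\mathrm{null}(B) \subseteq \mathrm{null}(C)$, so the Rayleigh quotient is well defined on the orthogonal complement $W = \mathrm{null}(B)^\perp$ and attains its supremum there, where $B$ is positive definite.

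On $W$ the problem reduces to a standard symmetric eigenvalue problem: substituting $u = B^{1/2}\io$ turns $\io^T C\io / \io^T B\io$ into $u^T(B^{-1/2}CB^{-1/2})u / u^Tu$, whose maximum is the largest eigenvalue of the symmetric matrix $B^{-1/2}CB^{-1/2}$ and whose maximizer $\io = B^{-1/2}u$ is recovered from the top eigenvector. It remains to note that everything is polynomial-time computable: the inverses defining $B$ and $C$ exist because $A+I$ is positive definite and $L+I$ is nonsingular (a directed Laplacian has eigenvalues with nonnegative real part, so $-1$ is not among them), and the largest eigenvalue together with an eigenvector of a symmetric matrix can be approximated to precision $\epsilon$ in time polynomial in the input size and $\log(1/\epsilon)$ by standard numerical methods.

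The main obstacle is precisely the loss of the clean relationship that held in the undirected case. There $B$ and $C$ were simultaneously diagonalized by the eigenvectors of $A$, reducing the whole question to comparing scalar functions of $\lambda_i$; in the directed case $C$ is no longer a function of $A$, so one cannot diagonalize the pair explicitly and must instead treat it as an abstract generalized eigenvalue problem. This is what forces the careful handling of the shared null space above, and it is also why the precision-$\epsilon$ qualifier is necessary, since the optimal PoA is a generalized eigenvalue and need not be rational.
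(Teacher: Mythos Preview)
Your proposal is correct and follows essentially the same approach as the paper: both recognize the problem as computing the largest generalized eigenvalue of the pair $(C,B)$, handle the singularity of $B$ by restricting to a complement of its null space, and then reduce to a standard symmetric eigenvalue computation. The only cosmetic differences are that the paper passes to the mean-zero subspace via an explicit basis matrix $P$ (implicitly assuming the symmetrized graph is connected) and uses a Cholesky factor of $\bar B$, whereas you work on $\mathrm{null}(B)^{\perp}$ directly and use $B^{1/2}$; your version is slightly more careful in justifying $\mathrm{null}(B)\subseteq\mathrm{null}(C)$ and the invertibility of $L+I$.
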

\begin{proof}
The total social cost is invariant under constant shifts in opinion.
Therefore, without loss of generality, we restrict our attention to
the space of opinion vectors with mean zero. Let us define a matrix
$P \in \mathbb{R}^{n \times (n-1)}$ to have $P_{j,j} = 1$, $P_{j+1,j} = -1$, 
and $P_{i,j} = 0$ otherwise. The columns of $P$ are a basis for the space of
vectors with mean zero; that is, we can write any such vector as 
$\io = P \hat{\io}$ for some $\hat{\io}$. We also define matrices
$\bar{B} = P^T B P$ and $\bar{C} = P^T C P$, which are positive definite
if the symmetrized graph is connected. The price of anarchy is then
given by the generalized Rayleigh quotient
$\rho_{\bar{C},\bar{B}}(\hat{\io}) =
 (\hat{\io}^T \bar{C} \hat{\io})/(\hat{\io}^T \bar{B} \hat{\io})$.
Stationary points of $\rho_{\bar{C},\bar{B}}$ satisfy the generalized
eigenvalue equation 
$(C-\rho_{\bar{C},\bar{B}}(\hat{\io}) \bar{B}) \hat{\io} = 0$. In
particular, the price of anarchy is the largest generalized
eigenvalue, and the associated eigenvector $\hat{\io}_*$ corresponds
to the maximizing choice of internal opinions.

The solution of generalized eigenvalue problems is a standard technique
in numerical linear algebra, and there are good algorithms that run in
polynomial time; see~\cite[\S8.7]{GVL}. In particular, because $\bar{B}$
is symmetric and positive definite, we can use the Cholesky factorization
$\bar{B} = R^T R$ to reduce the problem to the standard eigenvalue problem
$(R^{-T} \bar{C} R^{-1} - \lambda I) (R \hat{\io}) = 0$.
\end{proof}

\subsection{Upper Bounds for Classes of Graphs} 
Our goal in this section is rather simple: we would like to find
families of graphs for which we can bound the price of anarchy.  The
main tool we use is bounding the cost of the Nash equilibrium by a
function of a simple structure. By using a function that has a similar
structure to the social cost function we are able to frame the bound
as a generalized eigenvalue problem that can be solved using
techniques similar to the ones that were used in proving Theorem
\ref{PoAlongProof}.

\begin{proposition}  \label{prop:directed_bound}
Let $\mathcal{G}$ be a graph family for which there exists a $\beta$ such that for any $G \in \mathcal{G}$ and any internal opinions vector $s$, we have
$\co(\no) \leq \min_\ao (\beta (\ao^T A \ao) + ||\ao-\io||^2)$. Then, $\forall G \in \mathcal{G}$ and opinion vectors $s$, $\poa(G) \leq
\frac{ \beta +\beta \lambda_2}{1+\beta \lambda_2}$, where $\lambda_2$
is the second smallest eigenvalue of $A$. 
\end{proposition}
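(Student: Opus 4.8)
The plan is to use the hypothesis to replace the awkward Nash cost $\co(\no)=\io^T C\io$ (where $C$ involves the non-symmetric $L$) by a quantity that, like the optimal cost $\co(\oo)=\io^T B\io$, is a rational function of the \emph{symmetric} matrix $A$; this restores exactly the simultaneous-diagonalization structure that drove the undirected proof of Theorem~\ref{PoAlongProof}.

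First I would evaluate the right-hand side of the hypothesis in closed form. The map $\ao\mapsto \beta\,\ao^TA\ao+\|\ao-\io\|^2$ is convex, and setting its gradient to zero gives the minimizer $\ao^\ast=(\beta A+I)^{-1}\io$. Substituting back and using that $(\beta A+I)^{-1}-I=-\beta A(\beta A+I)^{-1}$ commutes with $A$, the minimum value collapses, just as in the derivation of $B$, to
$$\min_{\ao}\big(\beta\,\ao^TA\ao+\|\ao-\io\|^2\big)=\io^T\,\beta A(\beta A+I)^{-1}\,\io.$$
Writing $D:=\beta A(\beta A+I)^{-1}$, the hypothesis therefore reads $\co(\no)\le \io^T D\io$.

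Next I would record that $\co(\oo)=\io^T B\io$ with the \emph{same} $B$ as in the undirected case: since here $A$ is the (symmetric) Laplacian of the symmetrized graph, the social cost again has the form $\|\ao-\io\|^2+\ao^TA\ao$, so the optimum is $\oo=(A+I)^{-1}\io$ and $B$ is unchanged. Crucially, $D$ and $B$ are both rational functions of the single symmetric matrix $A$, so by Lemma~\ref{simultaneously_diagonalizable} they are simultaneously diagonalized by the eigenbasis of $A$; if $0=\lambda_1\le\lambda_2\le\cdots\le\lambda_n$ are the eigenvalues of $A$, then $D$ and $B$ have eigenvalues $\beta\lambda_i/(\beta\lambda_i+1)$ and $\lambda_i/(\lambda_i+1)$ respectively. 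Passing to coordinates $\io'=Q^T\io$ and bounding the ratio of the two diagonal quadratic forms by its largest term, exactly as in Theorem~\ref{PoAlongProof}, yields
$$\poa(G)=\frac{\io^T C\io}{\io^T B\io}\le\frac{\io^T D\io}{\io^T B\io}\le\max_i\frac{\beta\lambda_i/(\beta\lambda_i+1)}{\lambda_i/(\lambda_i+1)}=\max_i\frac{\beta(\lambda_i+1)}{\beta\lambda_i+1}.$$

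Finally I would locate the maximizing eigenvalue. Both $\beta\lambda/(\beta\lambda+1)$ and $\lambda/(\lambda+1)$ vanish at $\lambda=0$, so the $\lambda_1=0$ direction (the constant shift, along which both costs are zero and the ratio is an indeterminate $0/0$) contributes to neither quadratic form; the maximum thus ranges only over $\lambda\ge\lambda_2$. Writing $g(\lambda)=\frac{\beta(\lambda+1)}{\beta\lambda+1}=1+\frac{\beta-1}{\beta\lambda+1}$ makes the monotonicity transparent: in the relevant regime $\beta\ge1$ (any $\beta<1$ would force $\co(\no)=\co(\oo)$, incompatible with $\poa\ge1$), $g$ is non-increasing, so its maximum over $\lambda\ge\lambda_2$ is attained at the left endpoint $\lambda_2$, giving $\poa(G)\le g(\lambda_2)=\frac{\beta+\beta\lambda_2}{1+\beta\lambda_2}$. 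I expect the main obstacle to be precisely this last step: one must both discard the zero eigenvalue and pin down the direction of monotonicity of $g$ (which hinges on the sign of $\beta-1$); maximizing $g$ naively over all $\lambda\ge0$, or mis-signing $\beta-1$, would place the extremum at the wrong eigenvalue and break the stated bound.
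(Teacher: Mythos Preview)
Your proposal is correct and follows essentially the same approach as the paper: replace $\co(\no)$ by the minimum of the scaled cost $\beta\,\ao^TA\ao+\|\ao-\io\|^2$, express that minimum as $\io^T D\io$ for a rational function $D$ of the symmetric matrix $A$, simultaneously diagonalize $D$ and $B$, and reduce to the eigenvalue ratio $\frac{\beta(\lambda_i+1)}{\beta\lambda_i+1}$ maximized at $\lambda_2$. Your write-up is in fact more careful than the paper's in two places---you compute the closed form $D=\beta A(\beta A+I)^{-1}$ explicitly, and you justify via $g(\lambda)=1+\frac{\beta-1}{\beta\lambda+1}$ why the maximum over nonzero eigenvalues lands at $\lambda_2$---whereas the paper simply asserts both; the one small wrinkle is your parenthetical about $\beta<1$, where ``incompatible with $\poa\ge1$'' is not quite the right phrasing (PoA $=1$ is fine), but this is a side remark and the paper does not address that regime either.
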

%
\begin{proof}
Let $\tilde \oo$ be the vector minimizing $\beta (\ao^T A \ao) + ||\ao-\io||^2$. We can derive the following bound on the price of
anarchy: 
\begin{align*}
PoA(G) &=\dfrac{\co(\no)}{\co(\oo)} \leq \dfrac{\beta(\tilde \oo^T A \tilde \oo) + ||\tilde \oo-\io||^2}{(\oo^T A \oo) +
||\oo-\io||^2} = \dfrac{s^T C s}{s^T B s},
\end{align*}
where $C$ and $B$ are defined similarly to the matrices in Theorem \ref{PoAlongProof} and are simultaneously diagonalizable. If $\lambda_i$ is an eigenvalue of $A$ then $
\lambda^B_i=\frac{\lambda_i}{1+\lambda_i} $ and $\lambda^C_i=\frac{\beta\lambda_i}{1+\beta\lambda_i} $. As before, the maximum PoA is achieved when $\lambda^C_i /\lambda^B_i = \frac{\beta\lambda_i}
{1+\beta\lambda_i} /   \frac{\lambda_i}{1+\lambda_i}  = \frac{\beta \lambda_i +\beta}{\beta\lambda_i +1}$ is maximized. The maximum here is taken over all eigenvalues different than $0$ as we know that the PoA for the internal opinions vector associated with eigenvalue $0$ (which is a constant vector) is $1$. Therefore, the maximizing eigenvalue is $\lambda_2$.

\end{proof}

An immediate corollary is that if there exists a $\beta$ as in Proposition \ref{prop:directed_bound} then the PoA is bounded by this $\beta$.

We say that a bounded degree {\em asymmetric} expander
is an unweighted directed graph that does not contain any pair of 
oppositely oriented edges $(i,j)$ and $(j,i)$, and whose
symmetrized graph
has maximum degree $\Delta$ and edge expansion $\alpha$. We show:
\begin{claim} \label{clm:expanders_gen}
For a bounded degree asymmetric expander the PoA is bounded by $O(\Delta^2 / \alpha^2) $. 
\end{claim}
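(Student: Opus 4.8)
The plan is to bound the price of anarchy by a generalized-eigenvalue (Rayleigh-quotient) argument in the spirit of Lemma~\ref{lem:eval-ratio} and Proposition~\ref{prop:directed_bound}: I will replace the Nash cost $\co(\no)$ by a simpler quadratic form in $\io$ that is a function of the symmetrized Laplacian $A$ alone, so that the resulting ratio is controlled by the eigenvalues of $A$. Recall that for directed graphs $\co(\oo)=\io^T B\io$ with $B=A(A+I)^{-1}$ (hence $\lambda_i^B=\lambda_i/(\lambda_i+1)$), while the edge part of every cost is $\ao^T A\ao$ and $\co(\ao)=\|\ao-\io\|^2+\ao^T A\ao$. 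The key reduction will be an upper bound on $\co(\no)$ by a comparison quadratic that is simultaneously diagonalizable with $A$; then $\poa$ becomes a ratio of two such forms, its worst case is governed by the smallest nonzero eigenvalue $\lambda_2$ of $A$, and a final Cheeger step converts $\lambda_2$ into the combinatorial parameters $\Delta$ and $\alpha$.

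The heart of the argument is the bound $\co(\no)\le \mathrm{Var}(\io):=\io^T\!\big(I-\tfrac1n\mathbf{1}\mathbf{1}^{T}\big)\io$. Here I use the Eulerian structure (as in the introduction's statement of this bound): since in- and out-degrees agree, $A=L+L^{T}$ and $\mathbf{1}^{T}L=0$. The stationarity condition characterizing equilibrium is $(L+I)\no=\io$; moreover $\mathbf{1}^{T}L=0$ forces $\no$ and $\io$ to share the same mean, so after the cost-preserving shift making $\mathbf{1}^{T}\io=0$ we have $\mathrm{Var}(\io)=\|\io\|^{2}$. Substituting $\io=(L+I)\no$ and using the Eulerian identity $\no^{T}L\no=\tfrac12\no^{T}(L+L^{T})\no=\tfrac12\no^{T}A\no$, a short computation collapses $\co(\no)=\|\no\|^{2}-2\no^{T}\io+\|\io\|^{2}+\no^{T}A\no$ to the exact identity $\co(\no)=\|\io\|^{2}-\|\no\|^{2}$, whence $\co(\no)=\mathrm{Var}(\io)-\|\no\|^{2}\le \mathrm{Var}(\io)$. (The asymmetry/unit-weight hypothesis enters only to ensure that the symmetrized graph is simple, so that $\Delta$ and $\alpha$ are the ordinary maximum degree and edge expansion to which Cheeger applies; a cruder bound $\co(\no)\le(\Delta+1)\,\no^{T}A\no$ also follows from the stationarity condition and Cauchy--Schwarz, but the variance identity is the clean route.)

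With this in hand, $\poa(G)=\co(\no)/\co(\oo)\le \mathrm{Var}(\io)/\co(\oo)=\io^{T}\!\big(I-\tfrac1n\mathbf{1}\mathbf{1}^{T}\big)\io\,/\,\io^{T}B\io$. Both forms are simultaneously diagonalizable with $A$ (the projection $I-\tfrac1n\mathbf{1}\mathbf{1}^{T}$ is the identity on $\mathbf{1}^{\perp}=(\ker A)^{\perp}$ and vanishes on $\ker A=\mathrm{span}(\mathbf{1})$, as does $B$), so exactly as in Lemma~\ref{lem:eval-ratio} the ratio is at most the largest generalized eigenvalue $\max_{\lambda_i>0}\frac{1}{\lambda_i/(\lambda_i+1)}=\frac{\lambda_2+1}{\lambda_2}=1+\tfrac1{\lambda_2}$. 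It then remains to lower-bound the spectral gap: the discrete Cheeger inequality for the combinatorial Laplacian of a graph with edge expansion $\alpha$ and maximum degree $\Delta$ gives $\lambda_2=\Omega(\alpha^{2}/\Delta^{2})$, and therefore $\poa(G)\le 1+1/\lambda_2=O(\Delta^{2}/\alpha^{2})$, as claimed.

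The main obstacle is the Nash-cost bound $\co(\no)\le\mathrm{Var}(\io)$, and in particular recognizing that it rests on the Eulerian property rather than on degree or expansion. The in-directed star (asymmetric but non-Eulerian) already has $\co(\no)=\Theta(n)\gg\mathrm{Var}(\io)=\Theta(1)$, so no comparison of the Nash cost to the variance can hold without $A=L+L^{T}$; the clean cancellation to $\|\io\|^{2}-\|\no\|^{2}$ is precisely what that identity buys, and pinning it down is the crux of the whole argument. A secondary point to get right is the Cheeger step, where the exact power of $\Delta$ depends on the normalization relating edge expansion to the combinatorial spectral gap; the conservative estimate $\lambda_2=\Omega(\alpha^{2}/\Delta^{2})$ already suffices for the stated $O(\Delta^{2}/\alpha^{2})$, and a sharper form of the inequality would improve the exponent of $\Delta$.
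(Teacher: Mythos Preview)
Your proof is correct, and it takes a genuinely different route from the paper's. The paper invokes Proposition~\ref{prop:directed_bound} directly: assuming the existence of a $\beta$ with $\co(\no)\le\min_\ao(\beta\,\ao^TA\ao+\|\ao-\io\|^2)$ (established only later, via the Schur-complement machinery of Lemma~\ref{lem:schur} and Claim~\ref{clm:m-is-nice}, as $\beta=\Delta+1$ for Eulerian graphs), it obtains the bound $\frac{\beta(1+\lambda_2)}{1+\beta\lambda_2}\le\frac{1+\lambda_2}{\lambda_2}$ and then applies Cheeger. You instead bypass the whole $\beta$ framework by proving the clean Eulerian identity $\co(\no)=\|\io\|^2-\|\no\|^2$ (after centering), which immediately yields $\co(\no)\le\mathrm{Var}(\io)$ and hence, by the same simultaneous-diagonalization argument, $\poa\le(1+\lambda_2)/\lambda_2$. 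Both routes land at the same spectral quantity and finish identically with Cheeger; your variance identity is more elementary and self-contained for this claim, while the paper's $\beta$ framework is doing double duty, since it also delivers the separate $\Delta+1$ bound for general Eulerian graphs. As you note, the sharp Cheeger form $\lambda_2\ge\alpha^2/(2\Delta)$ in fact gives $O(\Delta/\alpha^2)$; the paper's stated $O(\Delta^2/\alpha^2)$ arises only because it bounds the numerator $1+\lambda_2$ by $1+\Delta$ rather than writing $(1+\lambda_2)/\lambda_2=1+1/\lambda_2$.
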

\begin{proof}
For an asymmetric graph, the matrix $A$ is simply the Laplacian of the 
underlying graph; 
this is why we require in the claim that the graph is asymmetric.

If $\Delta$ is the maximum degree, then we have $\lambda_2 \leq \lambda_n\leq \Delta$. 
We also have that $\lambda_2 \geq \alpha^2 / 2\Delta$
\cite{chung-spectral-graph-theory}.
We can now use this to bound the PoA in terms of the graph's expansion as follows:
\begin{align*}
\dfrac{\beta + \beta \lambda_2}{1+\beta \lambda_2} &\leq \dfrac {\beta + \beta \lambda_2}{\beta \lambda_2} \leq \dfrac{1+\lambda_2}{\lambda_2} 
\leq \dfrac{2\Delta(1+\Delta)}{\alpha^2} = O(\Delta^2 / \alpha^2).
\end{align*}
\end{proof}

The next natural question is for which graph families such a $\beta$
exists. Intuitively, such a $\beta$ exists whenever the cost of the
Nash equilibrium is smaller than the cost of the best consensus --- that
is, the optimal solution restricted to opinion vectors in which all
players hold the same opinion (constant vectors). This is true since
the function $ \beta (\ao^T A \ao) +||\ao-\io||^2$ is the social cost
function of a network in which the weights of all edges have been
multiplied by $\beta$.
However using this intuition for finding graph families for which
$\beta$ exists is difficult and furthermore does not help in computing
the value of $\beta$ (or a bound on it). Hence, we take a different
approach. In Lemma \ref{lem:schur}, we introduce an intermediate
function $g(\cdot)$ with the special property that its minimum value
is the same as the cost of the Nash equilibrium. By showing that there
exists a $\beta$ such that $g(\ao) \leq \beta \ao^T A \ao
+||\ao-\io||^2$ we are able to present bounds for Eulerian
bounded-degree graphs and additional bounds for Eulerian
bounded-degree asymmetric expanders.
As a first step, 
we use Schur complements to prove the 
following: 


\begin{lemma} \label{lem:schur}
Let $g(\ao)=\ao^T M \ao + ||\ao-\io||^2$ with $M = (I-C)^{-1}-I$. 
If $(I-C)$ is nonsingular then for the Nash equilibrium $\no$, we have $\min_{\ao} g(\ao) = \co(\no) $.
\end{lemma}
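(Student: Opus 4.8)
The plan is to reduce the statement to a single quadratic minimization in $\ao$ and evaluate that minimum by a Schur-complement computation. Recall that, as established just before the lemma, substituting the directed Nash equilibrium into the cost gives $\co(\no) = \io^T C \io$. Hence it suffices to prove that $\min_\ao g(\ao) = \io^T C \io$. Since $g$ is a quadratic in $\ao$, expanding $||\ao-\io||^2$ gives $g(\ao) = \ao^T (M+I)\ao - 2\io^T\ao + \io^T\io$, and I would minimize this over $\ao$ for the fixed vector $\io$. This is precisely where Schur complements enter.

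Concretely, I would write $g$ as a block quadratic form in the stacked vector $(\ao,\io)$,
\begin{equation*}
g(\ao) = \begin{pmatrix} \ao \\ \io \end{pmatrix}^T \begin{pmatrix} M+I & -I \\ -I & I \end{pmatrix} \begin{pmatrix} \ao \\ \io \end{pmatrix},
\end{equation*}
and minimize over the top block $\ao$ with $\io$ held fixed. The result is the quadratic form given by the Schur complement of the $(M+I)$ block,
\begin{equation*}
\min_\ao g(\ao) = \io^T \bigl( I - (M+I)^{-1} \bigr) \io .
\end{equation*}
I would then invoke the definition $M = (I-C)^{-1} - I$, so that $M+I = (I-C)^{-1}$ and therefore $(M+I)^{-1} = I-C$. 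Substituting, the Schur complement collapses to $I - (I-C) = C$, yielding $\min_\ao g(\ao) = \io^T C \io = \co(\no)$, as claimed. (Equivalently, one may complete the square: the unique stationary point is $\ao^\star = (M+I)^{-1}\io = (I-C)\io$, and plugging it back in returns value $\io^T C \io$.)

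The one point requiring genuine care — and essentially the only nontrivial step — is justifying that this stationary value is actually a minimum, i.e.\ that $M+I = (I-C)^{-1}$ is positive definite, so that $g$ is strictly convex and the Schur-complement / completion-of-squares formula is valid. The hypothesis that $(I-C)$ is nonsingular is what makes $M$, and hence $g$, well-defined; for the minimization itself one needs the stronger fact $C \prec I$, equivalently $M \succeq 0$. I would observe that $C$ is always positive semidefinite, being the sum $C = D^T D + (L+I)^{-T} A (L+I)^{-1}$ with $D = (L+I)^{-1} - I$ and $A$ a symmetric positive semidefinite Laplacian, and that the constant vector lies in $\ker C$; restricting to the mean-zero subspace (as in the earlier treatment of the PoA via the matrix $P$) and to the graph families in which the lemma is subsequently applied is what delivers $C \prec I$. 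Establishing this definiteness is the crux of the argument; granting it, the equality $\min_\ao g(\ao) = \io^T C \io = \co(\no)$ is immediate from the Schur-complement collapse above.
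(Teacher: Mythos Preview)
Your proposal is correct and follows essentially the same Schur-complement approach as the paper: write $g$ as a block quadratic form in $(\ao,\io)$, minimize over $\ao$ to obtain $\io^T(I-(M+I)^{-1})\io$, and use $M+I=(I-C)^{-1}$ to collapse this to $\io^T C\io=\co(\no)$. The paper additionally motivates the construction by first redoing the analogous computation for the optimal cost $\co(\oo)=\io^T B\io$ with $B=I-(A+I)^{-1}$, and it asserts (without proof in the lemma) that $M$ is symmetric positive semidefinite with null space the constant vectors---the very point you flag as needing care; so your treatment of the definiteness issue is, if anything, slightly more scrupulous than the paper's.
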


\begin{proof}
The social cost is a quadratic function
of the expressed opinion vector and the internal opinion vector:
\[
  c(\ao) = \ao^T A \ao + \|\ao-\io\|^2
         = \begin{bmatrix} \ao \\ \io \end{bmatrix}^T
           \begin{bmatrix} A+I & -I \\ -I & I \end{bmatrix}
           \begin{bmatrix} \ao \\ \io \end{bmatrix}.
\]
To compute the socially optimal vector, we minimize this quadratic
form in $\ao$ and $\io$ subject to constraints on $s$.
This yields $c(\oo) = \io^T B \io$, where the matrix
\[
  B = ((A+I)^{-1} -I)^2 + (A+I)^{-1} A (A+I)^{-1} = I - (A+I)^{-1}
\]
is a {\em Schur complement} in the larger system\footnote{Recall that the Schur complement of the block A of the matrix $ \begin{bmatrix} A & B\\ C & D \end{bmatrix}$ is $D-CA^{-1}B$. }.
  Schur complements
typically arise in partial elimination of variables from linear systems.
In this case, we have eliminated the $z$ variables in the stationary
equations for a critical point in the extended quadratic form.

Now consider the Nash equilibrium. As we assume that $(I-C)$ is invertible, we can define
\[
  M = (I-C)^{-1}-I.
\]
The matrix $M$ is symmetric and positive semidefinite, with a null
space consisting of 
the constant vectors.
That is, we can see
$M$ as the Laplacian of a new graph.
By design, $C = I-(M+I)^{-1}$, so we can mimic the construction above
to express $C$ as a Schur complement in a larger system. 
Thus, the social cost of the Nash equilibrium can be written
\[
  c(\no) = \min_{\ao}
           \begin{bmatrix} \ao \\ \io \end{bmatrix}^T
           \begin{bmatrix} M+I & -I \\ -I & I \end{bmatrix}
           \begin{bmatrix} \ao \\ \io \end{bmatrix},
\]
which is the optimal social cost in the new network.
\end{proof}

We then complement the lemma by showing that for Eulerian graphs $(I-C)$ is nonsingular and furthermore the matrix $M$ has a nice structure:
\begin{claim} \label{clm:m-is-nice}
For Eulerian graphs $M= A+LL^T$.
\end{claim}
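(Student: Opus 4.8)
The plan is to compute $I-C$ directly, show it simplifies to $(L+I)^{-T}(L+I)^{-1}$, and then invert. Writing $S=(L+I)^{-1}$ for brevity, the definition $C = ((L+I)^{-1}-I)^T((L+I)^{-1}-I) + (L+I)^{-T} A (L+I)^{-1}$ expands to
\[
  C = S^T S - S - S^T + I + S^T A S.
\]
The one genuinely graph-theoretic input I would establish first is that for an Eulerian directed graph the symmetrized matrix $A$ coincides with $L+L^T$. Since $L$ has diagonal entries equal to the weighted out-degrees $d^{out}_i$ and off-diagonal entries $-w_{i,j}$, the matrix $L+L^T$ has off-diagonal entries $-(w_{i,j}+w_{j,i})=A_{i,j}$ and diagonal entries $2d^{out}_i$; the Eulerian condition $d^{out}_i = d^{in}_i$ turns $2d^{out}_i$ into $d^{out}_i+d^{in}_i = A_{i,i}$, so indeed $A = L+L^T$. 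This is precisely where the Eulerian hypothesis enters, and it is the only place.

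Next I would exploit two elementary identities that follow from $S=(L+I)^{-1}$: namely $LS = (L+I)S - S = I - S$ and, taking transposes, $S^T L^T = I - S^T$. Substituting $A = L+L^T$ into the cross term gives
\[
  S^T A S = S^T L S + S^T L^T S = S^T(I-S) + (I-S^T)S = S + S^T - 2 S^T S.
\]
Plugging this back into the expression for $C$ collapses almost everything: the $-S-S^T$ terms cancel against $+S+S^T$, and $S^T S - 2S^T S = -S^T S$, leaving $C = I - S^T S$. Equivalently $I-C = S^T S = (L+I)^{-T}(L+I)^{-1}$, a product of an invertible matrix with its transpose, hence positive definite and in particular nonsingular --- which simultaneously discharges the hypothesis needed to invoke Lemma \ref{lem:schur}.

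Finally I would invert and expand: $(I-C)^{-1} = (S^T S)^{-1} = (L+I)(L+I)^T = (L+I)(L^T+I) = LL^T + L + L^T + I = LL^T + A + I$, using $A=L+L^T$ once more. Subtracting the identity yields $M = (I-C)^{-1} - I = A + LL^T$, as claimed. I do not anticipate a serious obstacle; the computation is formal matrix algebra once $A=L+L^T$ is in hand. The only points requiring care are tracking transposes correctly (as $L$ is no longer symmetric in the directed setting) and confirming that $L+I$ is invertible so that $S$ is well-defined, which follows from Gershgorin's theorem placing every eigenvalue of $L$ in the closed right half-plane and hence every eigenvalue of $L+I$ strictly to the right of the imaginary axis.
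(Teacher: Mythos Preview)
Your proof is correct and follows essentially the same route as the paper: both compute $I-C$ directly, use the Eulerian identity $A=L+L^T$ to collapse it to $(L+I)^{-T}(L+I)^{-1}$, and then invert to obtain $M=(L+I)(L+I)^T-I=A+LL^T$. Your write-up is somewhat more careful than the paper's in that you spell out why $A=L+L^T$ holds, note that $I-C$ is positive definite (hence nonsingular), and justify the invertibility of $L+I$ via Gershgorin, whereas the paper simply asserts $A=L+L^T$ for Eulerian graphs and proceeds.
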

\begin{proof}
We denote $\tilde L = L+I$ and $\tilde A = A+I$ then:
\begin{eqnarray*}
I-C &=& I - (\tilde L^{-1} -I)^T (\tilde L^{-1} -I) -\tilde L^{-T} (\tilde A - I) \tilde L^{-1} \\
&=& \tilde L ^{-1} + \tilde L ^{-T} - \tilde L^{-T} \tilde A \tilde L^{-1}.
 \end{eqnarray*}
We use the fact that for Eulerian graphs $A=L+L^T$ which implies that $\tilde A= \tilde L + \tilde L^T - I$ to simplify $I-C$:
\begin{eqnarray*}
I-C &=&  \tilde L ^{-1} + \tilde L ^{-T} - \tilde L^{-T} (\tilde L + \tilde L^T - I) \tilde L^{-1}\\
&=& \tilde L^{-T}   \tilde L^{-1}.
 \end{eqnarray*}
 We have that $M= (L+I) (L+I)^T-I = A+LL^T$. Let us understand what
the matrix $LL^T$ looks like. On the diagonal we have $ [LL^T]_{i,i} =  d_i^2 + d_i$ where $d_i$ is the degree of node $i$ and off the diagonal 
$ [LL^T]_{i,j} =  d_i L_{j,i} +d_j L_{i,j} + \sum_{k\neq i, j} (L_{i,k} L_{j,k}) =  d_i L_{j,i} +d_j L_{i,j} +|N(i) \cap N(j)|$.
\end{proof}

Recall that $\Delta$ is the he maximum degree of an Eulerian graph. We are now ready to prove the following proposition:

\begin{proposition}
For Eulerian graphs $\co(\no) \leq (\Delta+1) (\ao^T A \ao) +||\ao-\io||^2$. 
\end{proposition}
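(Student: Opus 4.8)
The plan is to combine Lemma~\ref{lem:schur} and Claim~\ref{clm:m-is-nice} to rewrite the Nash cost as a minimization, and then dominate the resulting quadratic form pointwise by $(\Delta+1)(\ao^T A \ao) + \|\ao-\io\|^2$. First I would invoke Lemma~\ref{lem:schur} together with Claim~\ref{clm:m-is-nice}: since the graph is Eulerian, $(I-C)$ is nonsingular and $M = A + LL^T$, so
$$\co(\no) = \min_\ao g(\ao) = \min_\ao\left( \ao^T(A + LL^T)\ao + \|\ao - \io\|^2 \right).$$
Because $\co(\no) = \min_\ao g(\ao) \leq g(\ao)$ for every fixed $\ao$, it suffices to establish the pointwise bound $g(\ao) \leq (\Delta+1)(\ao^T A \ao) + \|\ao-\io\|^2$ for all $\ao$, which reduces to the single matrix inequality (i.e.\ $LL^T \preceq \Delta A$ as quadratic forms)
$$\ao^T LL^T \ao \leq \Delta\,(\ao^T A \ao) \qquad \text{for all } \ao.$$

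The key step is proving this inequality, and it is where Cauchy--Schwarz and the Eulerian structure enter. I would write $\ao^T LL^T \ao = \|L^T \ao\|^2 = \sum_i (L^T \ao)_i^2$, and then use the Eulerian identity $d_i^{\mathrm{out}} = d_i^{\mathrm{in}} = d_i$ to compute $(L^T \ao)_i = \sum_{j \in N^{\mathrm{in}}(i)} (\ao_i - \ao_j)$, a sum of $d_i \leq \Delta$ edge differences over the in-neighbors of $i$ (this is exactly the identity that fails for non-Eulerian graphs). Cauchy--Schwarz then gives $(L^T\ao)_i^2 \leq d_i \sum_{j \in N^{\mathrm{in}}(i)}(\ao_i - \ao_j)^2 \leq \Delta \sum_{j \in N^{\mathrm{in}}(i)}(\ao_i-\ao_j)^2$. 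Summing over $i$ collects each directed edge exactly once, so the right-hand side is $\Delta \sum_{(j,i) \in E}(\ao_i - \ao_j)^2 = \Delta\,(\ao^T A \ao)$, the last equality holding because $A$ is the symmetrized (unweighted) Laplacian and hence $\ao^T A \ao = \sum_{(a,b) \in E}(\ao_a - \ao_b)^2$. Adding $\ao^T A \ao$ to both sides yields $\ao^T M \ao = \ao^T(A+LL^T)\ao \leq (\Delta+1)(\ao^T A \ao)$, whence $g(\ao) \leq (\Delta+1)(\ao^T A \ao) + \|\ao-\io\|^2$ and therefore $\co(\no) \le g(\ao) \le (\Delta+1)(\ao^T A \ao) + \|\ao-\io\|^2$ for every $\ao$, as claimed.

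The main obstacle I anticipate is the bookkeeping inside the key inequality rather than any deep step: one must verify that the per-node Cauchy--Schwarz factor is precisely the in-degree $d_i$ (which coincides with the out-degree only by the Eulerian hypothesis, and is bounded by $\Delta$), and that the double sum $\sum_i \sum_{j \in N^{\mathrm{in}}(i)}(\ao_i-\ao_j)^2$ reproduces $\ao^T A \ao$ with the correct multiplicity, counting each directed edge once and not double-counting. Everything else --- the reduction through the Schur-complement lemma and the final addition of $\ao^T A \ao$ --- is routine once the quadratic-form inequality $LL^T \preceq \Delta A$ is in hand.
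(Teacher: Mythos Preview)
Your proposal is correct and follows essentially the same approach as the paper: both invoke Lemma~\ref{lem:schur} and Claim~\ref{clm:m-is-nice} to reduce to the inequality $\ao^T LL^T \ao \leq \Delta\,\ao^T A \ao$, then use the Eulerian identity to write $(L^T\ao)_i$ as a sum over in-neighbors of edge differences and apply Cauchy--Schwarz termwise. The paper phrases the target as bounding the generalized Rayleigh quotient $\max_{\ao}\frac{\ao^T LL^T \ao}{\ao^T A \ao}\le\Delta$ rather than the semidefinite inequality $LL^T\preceq\Delta A$, but the computation is identical.
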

\begin{proof}
By Lemma \ref{lem:schur} and Claim \ref{clm:m-is-nice} we have that for Eulerian graphs
 $\co(\no)=\min_z g(\ao)=\min_z \ao^T (A+LL^T) \ao + ||\ao-\io||^2$. 
 What remains to show is that for $\beta = \Delta+1$ it holds that
 $g(\ao) \leq \beta \ao^T A \ao + ||\ao-\io||^2$.
  After some rearranging this boils down to showing that the following holds
$\ao^T LL^T \ao \leq (\beta -1) \ao^T A \ao$

Note that $A$ is the Laplacian for a symmetrized version of the graph;
assuming this graph is connected (since otherwise we can work separately
in each component), this means $A$ has one zero
eigenvalue corresponding to the constant vectors, and is positive
definite on the space orthogonal to the constant vector.  Similarly,
$LL^T$ has a zero eigenvalue corresponding to the constant vectors, and is
at least positive semi-definite on the space orthogonal to the constant
vectors.  Since $A$ is positive definite on the space of nonconstant
vectors, the smallest possible $\beta$ can be computed via the solution of
a generalized eigenvalue problem
\[
   \beta = 1 + \max_{z \neq \alpha e} \frac{z^T LL^T z}{z^T A z}.
\]

In the case of an unweighted graph, one get a bound
via norm inequalities.  Using the fact that the graph is Eulerian,
$L^T$ is also a graph Laplacian, and we can write
\[
  \left( L^T z \right)_i = \sum_{j=1}^n w_{j,i} (z_i-z_j),
\]
so
\[
  z^T L L^T z = \sum_{i=1}^n \left( \sum_{j=1}^n w_{j,i} (z_i-z_j) \right)^2.
\]
Similarly, we expand the quadratic form $z^T A z$ into
\[
  z^T A z = \sum_{i < j} (w_{i,j} + w_{j,i}) (z_i-z_j)^2
          = \sum_{i=1}^n \left( \sum_{j=1}^n w_{j,i} (z_i-z_j)^2 \right).
\]
Now, recall that in general $ \left( \sum_{j=1}^{d} x_j \right)^2 \leq d \sum_{j=1}^d x_j^2,$
which means that in the unweighted case
$
\left( \sum_{j=1}^n w_{j,i} (z_i-z_j) \right)^2 \leq d_i
\left( \sum_{j=1}^n w_{j,i} (z_i-z_j)^2 \right).
$
where $d_i = \sum_{j} w_{j,i}$ is the in-degree or out-degree.
Therefore,
\[
  \frac{z^T L L^T z}{z^T A z} \leq
  \frac{\sum_{i=1}^n d_i \sum_{j=1}^n w_{j,i} (z_i-z_j)^2}
       {\sum_{i=1}^n \sum_{j=1}^n w_{j,i} (z_i-z_j)^2} \leq
  \max_i d_i = \Delta.
\]
So for a general Eulerian graph, $\beta \leq 1+\Delta$.

\end{proof}


We observe that for a cycle the bound of $2$ on the price of anarchy is actually tight:
\begin{observation}
The PoA of a directed cycle is bounded by $2$ and approaches $2$ as the size of the cycle grows.
\end{observation}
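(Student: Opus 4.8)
The plan is to exploit the circulant structure of the directed cycle together with the Schur-complement machinery of Lemma~\ref{lem:schur} and Claim~\ref{clm:m-is-nice}, which already covers Eulerian graphs (the directed cycle is Eulerian, since every node has in-degree and out-degree $1$). The generic Eulerian bound $\Delta+1$ only gives $3$ here, so the first task is to show that the right constant is actually $\beta=2$. By Claim~\ref{clm:m-is-nice} we have $M = A + LL^T$, where $A$ is the Laplacian of the undirected cycle. Writing $L = I - P$ for the cyclic shift matrix $P$, I would compute $LL^T = (I-P)(I-P^T) = 2I - P - P^T = A$, using $PP^T = I$. Hence $M = 2A$: the ``new graph'' produced by Lemma~\ref{lem:schur} is exactly the cycle with every edge weight doubled. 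By Lemma~\ref{lem:schur} this means $\co(\no) = \min_{\ao}\,(2\,\ao^T A \ao + \|\ao-\io\|^2)$, so $\beta = 2$ satisfies the hypothesis of Proposition~\ref{prop:directed_bound} \emph{with equality}, and the corollary to that proposition yields $\poa \le 2$.

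The equality $M = 2A$ is the crux, because it lets me read off the \emph{exact} price of anarchy rather than an upper bound. Since $A$, $B = I-(A+I)^{-1}$, and $C$ are all circulant, they share the Fourier eigenbasis, with $A$ having eigenvalue $\lambda_k = 2 - 2\cos(2\pi k/n)$ on mode $k$. Plugging $\beta = 2$ into the eigenvalue formulas from the proof of Proposition~\ref{prop:directed_bound} (which now hold exactly) gives $\lambda^B_k = \lambda_k/(1+\lambda_k)$ and $\lambda^C_k = 2\lambda_k/(1+2\lambda_k)$; I would also verify $\lambda^C_k$ directly from the definition of $C$ using $|2-\omega^k|^2 = 1 + 2\lambda_k$ as a sanity check. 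Restricting to non-constant internal opinions (the constant mode $k=0$ gives $\poa = 1$), the price of anarchy is the largest generalized eigenvalue of the pair,
\[
  \poa \;=\; \max_{k \neq 0} \frac{\lambda^C_k}{\lambda^B_k}
        \;=\; \max_{k \neq 0} \frac{2(1+\lambda_k)}{1+2\lambda_k},
\]
and a real sinusoidal mode achieves this value, so it is attained by an actual internal opinions vector $\io$.

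To finish, I would analyze the scalar function $f(\lambda) = \frac{2(1+\lambda)}{1+2\lambda}$. Its derivative is $f'(\lambda) = -2/(1+2\lambda)^2 < 0$, so $f$ is strictly decreasing on $\lambda > 0$ and is therefore maximized at the smallest nonzero eigenvalue $\lambda_2 = 2 - 2\cos(2\pi/n)$. This gives $\poa = \frac{2 + 2\lambda_2}{1 + 2\lambda_2}$, which is strictly below $2$ for every finite $n$ (consistent with the $\poa \le 2$ bound), and as $n \to \infty$ we have $\lambda_2 \to 0$, so the ratio tends to $2$. This establishes both halves of the observation.

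The only genuinely delicate step is the identity $LL^T = A$ for the cycle: it is what sharpens $\beta$ from the generic $\Delta+1 = 3$ down to exactly $2$ and, crucially, holds with equality, which is what upgrades Proposition~\ref{prop:directed_bound} from an upper bound into an exact expression for $\poa$. Everything after that is a routine circulant/Fourier diagonalization and a one-variable monotonicity argument.
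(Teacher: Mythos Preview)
Your proposal is correct and follows essentially the same route as the paper: establish $LL^T = A$ for the directed cycle (so $M = 2A$ and $g(\ao) = 2\,\ao^T A \ao + \|\ao - \io\|^2$ exactly), invoke Proposition~\ref{prop:directed_bound} with $\beta = 2$ now holding with equality, and then use $\lambda_2 = 2(1-\cos(2\pi/n)) \to 0$ to conclude the PoA approaches~$2$. One small slip: the generic Eulerian bound $\Delta+1$ already gives $2$ here, not $3$, since $\Delta$ in that proposition is the in/out-degree of the directed graph (which is $1$ on the cycle); what the identity $LL^T = A$ buys you is \emph{equality} in the bound, which is indeed the point you go on to use.
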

\begin{proof}
For a cycle it is the case that $A=LL^T$; therefore $g(\ao)=2(\ao^T A \ao) +
||\ao-\io||^2$, and hence the bound assumed in Proposition
\ref{prop:directed_bound} is actually a tight bound. In order to show that
the PoA indeed approaches $2$ we need to show that $\lambda_2$
approaches $0$ as the size of the cycle grows. The fact that $A$ is
the Laplacian of an undirected cycle comes to our aid and provide us
an exact formula for $\lambda_2$:  $\lambda_2 = 2(1-\cos
(\frac{2\pi}{n}))$ (where $n$ is the size of the cycle), and this
concludes the proof.  
\end{proof}


For general Eulerian graphs we leave open the question of whether the
bound of $\Delta +1$ is a tight bound or not. 
Indeed, it is an intriguing open question whether there
exists a Eulerian graph with PoA greater than $2$.

\section{Adding Edges to the Graph } \label{sec:adding-edges}
The next thing we consider is the following class of problems:
Given an unweighted graph $G$ and a vector of internal opinions $\io$, 
find edges $E'$ to add to $G$ so as to minimize the social cost of the Nash equilibrium.
We begin with a 
general bound linking the possible 
improvement from adding edges to the price of anarchy. Let $G$ be a graph (either undirected or directed). Denote by $\co_G(\cdot)$ the cost function and by $\no$ and $\oo$ the 
Nash equilibrium and optimal solution respectively. Let $G'$ be the graph constructed by adding edges to $G$. Then: $\displaystyle{\dfrac{c_G(\no)}{c_{G'}(\no')} \leq \dfrac{c_G
(\no)}{c_{G'}(\oo')} \leq \dfrac {c_G(\no)}{c_G(\oo)} =\poa(G)}$. 
To see why this is the case, 
we first note that $c_{G'}(\oo') \leq c_{G'}(\no')$ since the cost of the Nash equilibrium cannot be smaller than the optimal solution. Second, $c_G(\oo) \leq c_{G'}(\oo')$ simply 
because $c_{G'}(\cdot)$ contains more terms than $c_{G}(\cdot)$. Therefore we have proved the following proposition:
\begin{proposition}
Adding edges to a graph $G$ can improve the cost of the Nash equilibrium by a multiplicative factor of at most the PoA of $G$.
\end{proposition}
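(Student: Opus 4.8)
The plan is to bound the multiplicative improvement from adding edges --- namely the ratio $\co_G(\no)/\co_{G'}(\no')$, where $\no'$ and $\oo'$ denote the Nash equilibrium and the optimal solution of the augmented graph $G'$ --- by inserting $\co_{G'}(\oo')$ as an intermediate quantity and establishing the two-step chain $\co_G(\no)/\co_{G'}(\no') \le \co_G(\no)/\co_{G'}(\oo') \le \co_G(\no)/\co_G(\oo) = \poa(G)$. Since the numerator $\co_G(\no)$ is fixed and every cost appearing in a denominator is nonnegative, it suffices to compare the three denominators; each inequality in the chain then reduces to a single monotonicity observation, after which taking reciprocals produces the stated ordering.

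For the first inequality I would invoke only the definition of $\oo'$ as the minimizer of $\co_{G'}$, which gives $\co_{G'}(\oo') \le \co_{G'}(\no')$; passing to reciprocals (legitimate because the costs are positive whenever the instance exhibits genuine disagreement) reverses this to $\co_G(\no)/\co_{G'}(\no') \le \co_G(\no)/\co_{G'}(\oo')$.

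For the second inequality the structural point is that forming $G'$ from $G$ only appends further nonnegative quadratic terms of the form $w_{i,j}(\ao_i-\ao_j)^2$ to the cost, so $\co_{G'}(\ao) \ge \co_G(\ao)$ pointwise for every opinion vector $\ao$. In particular $\co_{G'}(\oo') \ge \co_G(\oo') \ge \co_G(\oo)$, where the final step uses that $\oo$ minimizes $\co_G$. Taking reciprocals once more yields $\co_G(\no)/\co_{G'}(\oo') \le \co_G(\no)/\co_G(\oo)$, and the right-hand ratio is exactly $\poa(G)$ by definition, completing the chain.

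I expect no genuine computational obstacle here; the argument rests entirely on the definitions of Nash equilibrium and social optimum together with the pointwise monotonicity of the cost under edge addition. The only places demanding care are the direction of the inequalities when passing to reciprocals --- which is justified precisely because every cost is strictly positive once the opinions are not all forced to coincide --- and the observation that the entire chain is agnostic to whether $G$ is directed or undirected, since it never appeals to the $9/8$ bound or to the spectral structure of $A$.
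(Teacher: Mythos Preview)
Your proposal is correct and is essentially identical to the paper's own argument: the paper establishes the same chain $\co_G(\no)/\co_{G'}(\no') \le \co_G(\no)/\co_{G'}(\oo') \le \co_G(\no)/\co_G(\oo) = \poa(G)$ using exactly the two observations you give, namely $\co_{G'}(\oo') \le \co_{G'}(\no')$ by optimality and $\co_G(\oo) \le \co_{G'}(\oo')$ because $\co_{G'}$ adds only nonnegative terms to $\co_G$.
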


We study three variants on the problem, discussed in the introduction. 
In all variants, we seek the ``best'' edges to add in order to minimize
the social cost of the Nash equilibrium. 
The variants differ mainly in the types of edges we may add.

\xhdr{Adding edges from a specific node}
First, we consider the case in which we can only add edges from a specific node $w$. Here
we imagine that node $w$ is a media source that therefore 
does not have any cost for holding 
an opinion, and so we will use a cost function
that ignores the cost associated with it 
when computing the social cost.
Hence, our goal is to find a set of nodes $F$ such that adding edges from node $w$ to all the nodes in $F$
minimizes the cost of the Nash equilibrium while ignoring the cost exhibited by $w$.
By reducing the subset sum problem to this problem we show that:
\begin{proposition} \label{prop-from-is-np}
Finding the best set of edges to add from a specific node $w$ is NP-hard.
\end{proposition}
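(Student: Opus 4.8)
The plan is to reduce from the \emph{subset sum} problem: given positive integers $a_1,\ldots,a_m$ and a target $T$, I construct an instance of the edge-addition problem whose minimum achievable social cost is exactly $0$ if and only if some subset of the $a_i$ sums to $T$. The construction uses three kinds of nodes. The source $w$ itself is the node from which we may add edges, and its cost is ignored in the social objective. A single \emph{listener} $u$ has internal opinion $0$ and a single out-edge to $w$, so that $u$ pays for disagreeing with $w$ while $w$ does not listen to $u$; the social cost of the whole instance will collapse to the cost of $u$. Finally, for each integer $a_i$ we create a \emph{candidate} node $v_i$ with internal opinion $a_i$ and no out-edges, so that its equilibrium opinion is frozen at $a_i$ and it contributes no cost of its own. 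The candidates are the only nodes to which $w$ may be connected, and adding the directed edge $(w,v_i)$ makes $w$ listen to $v_i$ without perturbing $v_i$.

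The main computation tracks how a chosen set $F$ of candidates determines $w$'s equilibrium opinion and hence the cost of $u$. Because the dependencies run in one direction only --- the candidates are fixed, $w$ best-responds to its internal opinion $\io_w$ together with the candidates in $F$, and $u$ best-responds to $w$ --- the equilibrium can be read off in topological order. First, $w$'s best response is the weighted average
\[
  \no_w = \frac{\io_w + \sum_{v_i \in F} a_i}{1 + |F|}.
\]
The listener then sits at $\no_u = \no_w/2$, and expanding its cost gives $\co(\no) = (\no_u)^2 + (\no_u-\no_w)^2 = \tfrac{1}{2}\no_w^2$, since $u$ is the only node contributing to the social cost. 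Setting $\io_w = -T$ yields
\[
  \co(\no) = \frac{1}{2}\,\frac{\left(\sum_{v_i \in F} a_i - T\right)^2}{(1+|F|)^2}.
\]

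The final step is a decision statement using integrality. The quantity $\sum_{v_i \in F} a_i - T$ is an integer; it is $0$ exactly when $F$ encodes a solution of the \emph{subset sum} instance, and otherwise has absolute value at least $1$, which forces $\co(\no) \ge \tfrac{1}{2}(1+m)^{-2} > 0$. Hence the minimum social cost over all choices of $F$ equals $0$ if and only if the \emph{subset sum} instance is solvable, so any algorithm finding the best edges to add from $w$ also decides \emph{subset sum}, giving NP-hardness.

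The step I expect to be the main obstacle is controlling the denominator $1 + |F|$. Because the instance is unweighted, adding more candidate edges shifts both the numerator and the denominator of $w$'s averaged opinion, so the cost is not a priori a clean function of $\sum_{v_i \in F} a_i$ alone; indeed, aiming $\no_w$ at any nonzero target would leave a spurious $|F|$ term and break the equivalence with subset sum. The construction circumvents this by placing the listener's target at $0$ (internal opinion $0$, with $\io_w = -T$), so that the numerator of $\no_w$ vanishes exactly on the subset-sum target; the denominator then merely rescales a quantity that is either zero or bounded away from zero by integrality, and so never affects whether the optimum is $0$. A secondary point to check carefully is that ignoring $w$'s cost is legitimate and that $w$ interacts with the candidates and with $u$ in only one direction, so that the equilibrium decomposes cleanly and $\co(\no)$ really reduces to the single term above.
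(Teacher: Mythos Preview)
Your proposal is correct and takes essentially the same approach as the paper: both reduce from subset sum by arranging that $w$'s equilibrium opinion is an average $\frac{\io_w+\sum_{j\in F}\io_j}{1+|F|}$ that vanishes exactly when the selected candidates hit the target, with listener node(s) pointing at $w$ whose cost is proportional to $\no_w^2$. The paper uses $n$ listeners and rational candidate opinions $-a_i/t$ with $\io_w=1$, while you use a single listener and integer opinions $a_i$ with $\io_w=-T$; these differences are purely cosmetic.

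One small point to tighten: you assert that ``the candidates are the only nodes to which $w$ may be connected,'' but the problem as stated allows adding the edge $(w,u)$ as well. You should note (it is a one-line check) that adding $(w,u)$ only changes the denominator in $\no_w$---solving $\no_w=\frac{-T+\no_w/2+\sum_{v_i\in F}a_i}{2+|F|}$ still gives $\no_w=0$ iff $\sum_{v_i\in F}a_i=T$---so it cannot create a zero-cost solution when subset sum fails. The paper's writeup has the analogous implicit step for its peripheral nodes.
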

\begin{proof}
Denote by $G+F$ the graph constructed by adding to $G$ edges from $w$ to all nodes in $F$. 
Our goal is to find a set $F$ minimizing $\coa_{G+F}(\no)$, where $\no$
is a Nash equilibrium in the graph $G+F$ and $\coa$ denotes the
total cost of all nodes in $\no$ except for node $w$.
We show that finding this
set is NP-hard by reducing the subset sum problem to this problem.
Recall that in the subset sum problem we are given a set of positive integers $a_1,\dots,a_n$ and a number $t$. We would like to know if there exists any subset $S$ such that $
\sum_{j \in S} a_j = t$.
Given an instance of the subset problem, we reduce it to the following instance of the opinion game. The
instance conatins an in-directed star with $n$ peripheral nodes that
have an internal opinion of $0$ and a center node $w$ which has an
internal opinion of $1$ and $n$ isolated nodes
that have internal opinions of $-\frac{a_i}{t}$.

\begin{lemma}
For the graph $G$ and the vector of internal opinions $s$ defined above, there exists a set $F$ such that $\coa_{G+F}(\no) = 0$ if and only if the answer to the subset problem is 
yes.
\end{lemma}
\begin{proof}
As seen in the introduction, in the Nash equilibrium each one of the peripheral nodes holds an opinion of $\frac{1}{2}x_w$. Node $w$ hold an opinion of $\no_w=\dfrac{1+\sum_{j
\in F} \io_j }{1+|F|}$. Therefore the cost of the Nash equilibrium in $G+F$ is:
$$\coa_{G+F}(\no) = n \left( (\dfrac{1}{2} \no_w-0)^2 + (\no_w-\dfrac{1}{2}\no_w)^2 \right)  = 2n\left(\dfrac{1+\sum_{j\in F} \io_j }{2(1+|F|)} \right)^2.$$ 
Clearly the cost is nonnegative as it is a sum of quadratic terms;
moreover it equals $0$ if and only if $\sum_{j\in F} \io_j = -1$.
Defining $F'=\{j \in F | s_j<0\}$, we have $\sum_{j\in F'} \io_j = -1$. By
the reduction we have that $\sum_{j\in F'} -\frac{a_j}{t} = -1$; if we
multiply by $-t$ we get that $\sum_{j\in F'} a_j = t$ implying that
there exists a solution to the subset sum problem.
\end{proof}
\end{proof}

\xhdr{Adding edges to a specific node}
Next, we consider the case in which we can only add edges to a
specific node. We can imagine again that node $w$ is a media source;
in this case, however, our goal is to find the best set of people to
expose to this media source. By reducing the minimum vertex cover
problem to this problem we show that:

\begin{proposition} \label{prop-to-is-np}
Finding the best set of edges to add to a specific node $w$ is NP-hard.
\end{proposition}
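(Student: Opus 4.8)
The plan is to reduce from minimum vertex cover. Given an instance $(H,k)$ with $H = (V_H, E_H)$, I would build an opinion-game instance containing the distinguished media node $w$ with a fixed opinion $\io_w$ (whose own cost is ignored, exactly as in the media-source interpretation), together with one node for each vertex of $H$ and a small gadget for each edge of $H$. The intended correspondence is that the set $F$ of nodes we choose to expose to $w$ — i.e.\ the nodes from which we add an edge into $w$, so that they now listen to $w$ — should correspond to a vertex cover $S$ of $H$, so that deciding the optimal $F$ decides the vertex-cover problem.

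The gadget design I would aim for is as follows. Give each vertex node $v$ an internal opinion that, absent any exposure, places it far from $\io_w$, and attach to each edge $e = (u,v)$ an edge node whose equilibrium opinion, and hence whose contribution to $\co(\no)$, is small exactly when at least one of $u,v$ has been pulled toward $\io_w$ by an exposure edge. Thus an uncovered edge of $H$ leaves its edge gadget ``stranded,'' contributing a fixed positive amount to the social cost, while a covered edge contributes essentially nothing. To encode the budget $k$, I would additionally arrange that each individual exposure edge raises the cost by a fixed amount, since pinning a vertex node toward $\io_w$ moves it away from its own internal opinion. With the two effects balanced, the minimized social cost of the Nash equilibrium can be made to take the form $(\text{const}) + a\cdot(\text{\# uncovered edges}) + b\cdot|F|$, so that it drops below a chosen threshold if and only if $H$ admits a vertex cover of size at most $k$.

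The main obstacle is that the Nash equilibrium is a global object: by the analysis in Section~\ref{sec:undirected}, $\no = (\tfrac{1}{2}A + I)^{-1}\io$ (with the appropriate directed analogue here), so the equilibrium opinions — and therefore $\co(\no)$ as a function of $F$ — do not decompose as a clean sum of local contributions. The real work is in designing the gadgets so that the global equilibrium is effectively decoupled: each exposed vertex node must be pinned close enough to $\io_w$ that its neighboring edge gadgets behave as if that endpoint were covered, and the residual couplings between distinct gadgets through shared vertex nodes must be controlled, for instance by choosing weights and internal opinions so that these cross-terms are negligible compared with the per-edge and per-exposure gaps. Once this decoupling is established, computing the equilibrium and the resulting cost within each gadget is a routine application of the equilibrium characterization, and the threshold argument reduces to pure bookkeeping.

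I would then finish by verifying both directions of the equivalence: a cover of size at most $k$ yields an $F$ meeting the threshold, and conversely any $F$ meeting the threshold yields, after locally rerouting away any wasted exposures that do not cover a new edge, a vertex cover of size at most $k$. This establishes that the optimal-$F$ problem is at least as hard as minimum vertex cover, proving NP-hardness.
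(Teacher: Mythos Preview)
Your high-level plan matches the paper's: reduce from minimum vertex cover, with a vertex node $u_i$ for each $i \in V_H$, an edge gadget for each $e \in E_H$ whose cost drops when at least one endpoint is exposed to $w$, and a per-exposure penalty so that the Nash cost has the form $\mathrm{const} + a\cdot(\text{\# uncovered edges}) + b\cdot |F|$.

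The gap is in how you handle the decoupling. You describe this as the ``real work'' and propose to make cross-terms ``negligible'' by tuning weights. The paper avoids this entirely by making the influence structure acyclic: $w$ is isolated (so its opinion is fixed); each $u_i$ initially has no outgoing edges (so its equilibrium opinion is simply its internal opinion, or the average with $\io_w$ once exposed); each edge node $v_{i,j}$ has outgoing edges \emph{to} $u_i$ and $u_j$ but nothing points back; and an in-directed star of auxiliary leaves hangs off each $v_{i,j}$ to amplify its cost. Because information flows one way, the equilibrium is computed exactly layer by layer, and the cost of each edge gadget is a function only of which of $u_i,u_j$ are in $F$ --- no approximation, no residual coupling. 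Your sketch, which speaks of controlling cross-terms through shared vertex nodes, suggests you had in mind bidirectional or undirected links between edge gadgets and vertex nodes; that route is substantially harder and may not close.

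A second point your formula presupposes but does not address: for the reduction to land on vertex cover rather than some weighted variant, the cost of an edge gadget must be \emph{identical} whether one or both endpoints are exposed. In the paper this is engineered by the specific choice of $\io_w = -3$, internal opinions $1$ for $u_i$ and $v_{i,j}$, and $24$ leaves per star, which makes the one-covered and two-covered cases both cost $4$ (versus $12$ uncovered). Getting this equality is a genuine constraint on the gadget parameters, not automatic. Finally, you also need to rule out that the optimal $F$ contains edge-gadget nodes or leaves rather than vertex nodes; the paper handles this by a short exchange argument (their Lemma~\ref{to-eq}) showing any optimal $F$ can be converted, at no extra cost, into one containing only $u_i$'s that form a cover.
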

\begin{proof}
%
Given an instance of the minimum vertex cover problem, consisting
of an undirected graph $G'=(V',E')$, we
construct an instance of the opinions game as follows:
\begin{itemize}
\item For each edge $(i,j)\in E'$ we create a vertex $v_{i,j}$ with internal opinion $1$.
\item For every $v_{i,j}$ we create an in-directed star with $24$ peripheral nodes that have an internal opinion of $0$.
We later refer to node $v_{i,j}$ and all the nodes directed to it as {\em $v_{i,j}$'s star}.
\item For each vertex $i \in V'$ we create a vertex $u_i$ with internal opinion $1$.
\item For each edge $(i,j) \in E'$ we create directed edges $(v_{i,j},u_i)$ and $(v_{i,j},u_j)$.
\item We create an isolated node $w$ with internal opinion $-3$.
\end{itemize}

Let $T$ be the set of vertices such that adding edges from all the nodes in
$T$ to node $w$ minimizes the cost of the Nash equilibrium.  Denote by
$G+T$ the graph constructed by adding to $G$ edges from all nodes in
$T$ to $w$. Observe that $T$ cannot contain any nodes with internal
opinion of $0$ as adding an edge from a node with internal opinion $0$
to $w$ which has internal opinion $-3$ can only increase the cost of
the Nash equilibrium. Thus, $T$ contains only vertices of type
$v_{i,j}$ and $u_i$. In the table in Figure \ref{fig_cost} we compute
$v_{i,j}$'s opinion in the Nash equilibrium and the cost of its star
as a function of which vertices that influence $v_{i,j}$ are in $T$.
For example in the first row we consider the case in which
$v_{i,j},u_i,u_j \notin T$. In this case, $v_{i,j}$'s opinion is
$(1+1+1)/3=1$ and the cost of its star is $\frac{1}{2}  \cdot 24 =
12$. We use the costs in this table to reason about the structure of
$T$ and the cost of the Nash equilibrium in $G+T$.

\begin{figure}[!h] 
\begin{center}
\begin{tabular}{|c|c|c|c|}
\hline  & Configuration & $v_{i,j}$'s opinion & $v_{i,j}$'s star cost 
\\\hline 1 & $v_{i,j},u_i,u_j \notin T$ & $1$ & $12$ 
\\\hline 2 & $v_{i,j} \in T,u_i,u_j \notin T$ & $0$ & $12$ 
\\\hline 3 & $v_{i,j},u_i \in T, u_j \notin T $ & $-1/2$ & $14$ 
\\\hline 4 & $v_{i,j},u_i,u_j \in T$ & $-1$ & $20$ 
\\\hline 5 & $v_{i,j},u_j \notin T,u_i \in T$ & $1/3$ & $4$ 
\\\hline 6 & $v_{i,j} \notin T,u_i,u_j \in T$ & $-1/3$ & $4$ 
\\\hline \end{tabular}
 \caption{The total cost of $v_{i,j}$'s star for different configurations}
 \label{fig_cost}
 \end{center}
\end{figure}


In Lemma \ref{to-eq} we show how to construct from $T$ a set $T'$ such that
$\co_{G+T'}(\no') = \co_{G+T}(\no)$ and $T'$ is a \emph {pseudo vertex cover}. 
We say that
a set $T'$ is a pseudo vertex cover if it obeys two properties: first, it contains 
only vertices of the type $u_i$. Second, the vertices in $V'$ corresponding to the $u_i$'s in $V$
constitute a vertex cover in $G'$. 

Next, we consider the cost of the Nash equilibrium in the graph $G+S$ where $S$ is a pseudo vertex cover:
By the table in Figure \ref{fig_cost} we have the cost associated with every $v_{i,j}$'s star is $4$. This 
is by the fact that $S$ is a pseudo vertex cover and hence the only applicable cases are $5$ and $6$,
in both cases the total cost of $v_{i,j}$'s star is $4$. Also, note that the cost  for each $u_i \in S$ is $8$.
Hence, the total cost of the Nash equilibrium for network $G+S$ is $f(S)=4|E|+8|S|$. By construction, $T'$
is a pseudo vertex cover and it also minimizes $f(\cdot)$, since $\co_{G+T}(\no) =  \co_{G+T'}(\no') =4|E|+8|T'|$ 
and $T$ is optimal. Therefore $T'$ corresponds to a minimum vertex cover in $G'$. A key element in this reduction is the property that 
the cost of $v_{i,j}$'s star is the same, whether $u_i\in T'$ or both $u_i$ and $u_j$ belong to $T'$.

\begin{lemma} \label{to-eq}
There exists a pseudo vertex cover $T'$ such that $\co_{G+T'}(\no') = \co_{G+T}(\no)$
\end{lemma}
\begin{proof}
First, we obtain $T''$ by removing from $T$ all vertices of type $v_{i,j}$. We have that $\co_{G+T''}(\no'') \leq \co_{G+T}(\no)$
since by examining the table in Figure \ref{fig_cost} we observe that including vertices of type $v_{i,j}$ in $T''$
can only increase the cost of the Nash equilibrium. Since $T$ is optimal, it has to be the case that $\co_{G+T''}(\no'') = \co_{G+T}(\no)$.
Next, to get $T'$ we take $T''$ and for each vertex $v_{i,j}$ such that $u_i,u_j \notin T''$ we add
$u_i$ to $T'$. By adding these vertices we have not increased the cost since in the worst case
$v_{i,j}$'s star and $u_i$ have a total cost of $12$ which is the same as their previous total cost.
As before by the optimality of $T$ we could not have reduced the cost by adding the vertices, therefore it 
still holds that $\co_{G+T'}(\no') = \co_{G+T}(\no)$. To complete the proof observe that by construction $T'$ is a pseudo vertex cover.
\end{proof}

\end{proof}


\xhdr{Adding an arbitrary set of edges}
In the last case we consider, which is the most general one, we can add any set of edges. 
For this case we leave open the question of the hardness of adding an unrestricted set of edges.
We do show that finding the best set of $k$ arbitrary edges is NP-hard. This is done
by a reduction from $k$-dense subgraph \cite{FeigePK01} :
\begin{proposition} \label{prop:arb-budget-np}
Finding a best set of arbitrary $k$ edges is NP-hard.
\end{proposition}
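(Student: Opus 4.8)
The plan is to reduce from the Densest-$\kappa$-Subgraph problem (DkS), shown NP-hard in \cite{FeigePK01}: given a graph $H=(V_H,E_H)$ and integers $\kappa$ and $t$, decide whether some set of $\kappa$ vertices of $H$ induces at least $t$ edges. The idea is to build a directed opinion-game instance $G$ (consistent with this section and with the directed-star reductions used above) in which the candidate edges we may add correspond to the vertices of $H$; adding an edge ``activates'' the corresponding vertex; and the cost saving from a set $S$ of $\kappa$ added edges decomposes into a constant, a \emph{uniform} per-activation term (hence itself a constant once $|S|=\kappa$), and a term proportional to the number of $H$-edges induced by $S$. Minimizing the Nash cost then becomes maximizing the number of induced edges, so the threshold $t$ of DkS will map to a Nash-cost threshold. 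Here $\kappa$ plays the role of the budget $k$ in the statement.

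Concretely, for each $v \in V_H$ I would create a vertex-gadget: an in-directed star whose center $c_v$ has internal opinion $1$ and whose leaves have internal opinion $0$, together with a single designated candidate edge $e_v$ from $c_v$ to a fixed node of opinion $0$. Exactly as in the star example of the introduction (using the closed forms for the Nash equilibrium derived earlier), adding $e_v$ pulls $c_v$ from its unactivated equilibrium value $h_0$ down to an activated value $h_1$, lowering that star's Nash cost by a controlled amount; let $x_v$ denote $c_v$'s equilibrium opinion, which is $h_0$ or $h_1$ according to whether $e_v\in S$. For each $H$-edge $(u,v)$ I would add an edge-gadget: a node $m_{uv}$ with directed in-edges from $c_u$ and $c_v$ and a suitably chosen internal opinion, and crucially \emph{no} edges back to $c_u,c_v$. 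Because the gadget is read-only, $x_v$ stays a purely local, binary function of whether $e_v$ was added, and the cost incurred at $m_{uv}$ is a fixed quadratic form in $(x_u,x_v)$.

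Writing $x_v = h_0 + (h_1-h_0)\,\mathbf{1}[v\in S]$ and expanding the sum of the $m_{uv}$ costs over $E_H$, the total separates into a constant, terms linear in the indicators $\mathbf{1}[v\in S]$ weighted by $\deg_H(v)$, and a bilinear term whose coefficient counts exactly the $H$-edges with both endpoints in $S$. The two design freedoms I would exploit are: (i) tuning $m_{uv}$'s internal opinion (e.g.\ to the activated value $h_1$) so that the bilinear interaction has the sign making ``both endpoints activated'' cheaper, so that more induced edges means lower cost; and (ii) padding each vertex-gadget's star by an amount proportional to $\deg_H(v)$ so that the degree-weighted linear terms are absorbed into a single uniform per-activation saving $a$. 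After this bookkeeping, the Nash cost of $G+S$ for any $S$ consisting of $\kappa$ candidate edges equals $C_0 - a\kappa - \delta\,|E_H(S)|$ for constants $C_0,a,\delta>0$, where $E_H(S)$ is the set of induced $H$-edges. A final routine argument shows that no edge other than the $e_v$ is ever worth adding (every other potential edge either joins already-agreeing nodes or strictly raises the cost), so an optimal budget is always spent on candidate edges. Hence minimizing $\co_{G+S}(\no)$ over $\kappa$-edge additions is equivalent to maximizing $|E_H(S)|$ over $\kappa$-vertex subsets, and the DkS threshold $t$ corresponds to the Nash-cost threshold $C_0 - a\kappa - \delta t$.

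The main obstacle is the clean super-additive accounting. Making $x_v$ genuinely binary and independent of the rest of the instance is what forces the read-only, directed design of the edge-gadgets; without it, adding $e_v$ would perturb neighboring gadgets and the saving would no longer decompose additively. Equally delicate is cancelling the degree-weighted linear terms so that the \emph{only} configuration-dependent quantity left is the induced-edge count: if a residual degree bias survived, the reduction would favor high-degree vertices rather than dense subsets. Verifying the exact constants $h_0,h_1,a,\delta$ and the sign of the interaction is the calculation on which correctness ultimately rests, but each is a direct computation on the local gadgets.
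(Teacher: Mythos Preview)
Your reduction is correct and is essentially the paper's own argument: reduce from Densest-$k$-Subgraph with an in-directed star per $H$-vertex whose center is ``activated'' by a single candidate edge to a fixed sink, together with a read-only node per $H$-edge whose Nash cost is a quadratic in the two centers' (binary) states. One simplification you have almost spotted: with the edge-gadget's internal opinion set to the activated value $h_1$ (the paper takes $h_1=0$ by placing the sink at opinion $-1$ so activated centers sit at $0$), the gadget cost turns out to be \emph{identical} whether zero or one endpoint is activated and drops only when both are---so the degree-weighted linear term you worry about vanishes outright, and the per-vertex padding step is unnecessary.
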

\begin{proof}
We show a reduction from the ``Dense $k$-Subgraph Problem'' defined in \cite{FeigePK01}:
given an undirected graph $G'=(V',E')$ and a parameter $k$, find a set of $k$ vertices with maximum
average degree in the subgraph induced by this set.
Given an instance of the ``Dense $k$-Subgraph Problem'' we create an instance of
the opinion game as follows:
\begin{itemize}
\item For every edge $(i,j)\in E'$ we create a node $v_{i,j}$ with internal opinion 0.
\item For every vertex $i\in V'$ we create a node $u_i$ with internal opinion 1.
\item For every $v_{i,j}$ we add directed edges $(v_{i,j},u_i)$ and $(v_{i,j},u_j)$.
\item For every $u_i$ we create an in-directed star with $20$ peripheral nodes that have an internal opinion of $0$.
\item Finally, we create a single isolated vertex $w$ with internal opinion -1.
\end{itemize}
The proof is composed of two lemmas. In Lemma \ref{arb_edges_budget1} we show that all edges in the minimizing set are of type $(u_i,w)$. Then we denote by $T$ the set of 
nodes of type $u_i$ such that adding an edge from each one of these nodes to $w$ minimizes the cost, and in Lemma \ref{arb_edges_budget2} we show that $T$ is a $k$ 
densest subgraph.
\begin{lemma} \label{arb_edges_budget1}
The best set of edges to add contains only edges from nodes of type $u_i$ to $w$.
\end{lemma}
\begin{proof}
Our first observation is that any edge which is not from nodes of type $u_i$ affects the cost of at most one node. This is simply because all nodes in the graph, if affected by any 
node at all, are affected by nodes of type $u_i$. The cost of each one of the nodes in the graph in the Nash equilibrium is at most $1$, and therefore the improvement in the cost from 
adding any such edge is at most $1$. On the other hand, adding an edge from nodes of type $u_i$ to $w$ reduces the cost by at least $\frac{1}{2}20-2(1-0)^2=8$. It is easy to 
verify that adding edges from nodes of type $u_i$ to other nodes has a smaller effect on $i$'s cost.
\end{proof}
\begin{lemma}  \label{arb_edges_budget2}
The previously defined set $T$ is a solution to the dense $k$-subgraph problem.
\end{lemma}
\begin{proof}
The key point is the fact that the cost associated with a node of type
$v_{i,j}$ is $0$ if and only if both $u_i$ and $u_j$ are in $T$;
otherwise this cost is exactly $\frac{2}{3}$.
When $u_i \in T$, the opinion of $u_i$ in the Nash equilibrium is $0$
since it is averaging between $1$ and $-1$. Therefore node
$v_{i,j}$'s associated cost in the Nash equilibrium is:
\begin{itemize}
\item $0$ - if both $u_i$ and $u_j$ are in $T$ - since $v_{i,j}$ holds opinion $0$.
\item $\frac{2}{3}$ -  if both $u_i$ and $u_j$ are not in $T$ - since $v_{i,j}$'s opinion is $\frac{2}{3}$ and therefore the cost is $(0-\frac{2}{3})^2+2(1-\frac{2}{3})^2=\frac{2}{3}$.
\item $\frac{2}{3}$ -  if only one of $u_i$, $u_j$ is in $T$ - then $v_{i,j}$'s opinion is $\frac{1}{3}$ and therefore the cost is $(0-\frac{1}{3})^2+(0-\frac{1}{3})^2+(1-\frac{1}{3})^2=\frac
{2}{3} $.
\end{itemize}
Hence to minimize the cost of the Nash equilibrium we should choose a set $T$ maximizing the number of nodes of type $v_{i,j}$ for which both $u_i$ and $u_j$ are in $T$. In the 
graph $G'$ from the $k$-dense subgraph problem that set $T$ is a set of vertices and what we are looking for is the set $T$ with an induced graph that has 
the maximum number of edges. 
By definition this set is exactly a $k$-densest subgraph.
\end{proof}
\end{proof}

Finding approximation algorithms for all of the problems discussed
in propositions~\ref{prop-from-is-np}, \ref{prop-to-is-np}, 
and \ref{prop:arb-budget-np}
is an interesting question. As a first step we offer a
$\frac{9}{4}$-approximation for the problem of optimally adding
edges to a directed graph $G$ --- a problem whose hardness for
exact optimization we do not know.
The approximation algorithm works simply by including
the reverse copy of every edge in $G$ that is not already in $G$;
this produces a bi-directed graph $G'$.
\begin{claim} \label{claim:arb-edge-approx}
$c_{G'}(\no') \leq \frac{9}{4} c_G(\oo) $.
\end{claim}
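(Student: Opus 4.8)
The plan is to chain the undirected price-of-anarchy bound applied to the symmetrized graph $G'$ with a cheap comparison of the two cost functions evaluated at a single, well-chosen opinion vector. The first observation is that since $G'$ is bi-directed, every unordered pair $\{i,j\}$ that is an edge is present in both directions with unit weight, so the underlying $L$ is symmetric and the opinion game on $G'$ is \emph{exactly} the undirected game studied in Section~\ref{sec:undirected}: the edge matrix, the Nash equilibrium, and the optimum all coincide with those of the corresponding undirected graph. Consequently Theorem~\ref{PoAlongProof} applies verbatim to $G'$ and yields
$$\co_{G'}(\no') \le \tfrac{9}{8}\,\co_{G'}(\oo'),$$
where $\oo'$ is the optimal opinion vector for $G'$. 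The remaining task is to bound $\co_{G'}(\oo')$ in terms of $\co_G(\oo)$, where $\oo$ is the optimum of the \emph{original} directed graph $G$.

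The factor of $2$ separating $9/8$ from $9/4$ will come entirely from comparing the two cost functions at the fixed vector $\oo$. First, because $\oo'$ minimizes $\co_{G'}$, I immediately get $\co_{G'}(\oo') \le \co_{G'}(\oo)$. Next I would compare $\co_{G'}(\oo)$ with $\co_G(\oo)$ directly: writing both as $\oo^T A' \oo + \|\oo-\io\|^2$ and $\oo^T A \oo + \|\oo-\io\|^2$ respectively (with $A'$ the edge matrix of $G'$), the internal-opinion term $\|\oo-\io\|^2$ is identical, so only the edge matrices differ. In $G'$ each surviving pair $\{i,j\}$ has combined weight $w'_{i,j}+w'_{j,i}=2$ and hence contributes $2(\oo_i-\oo_j)^2$, whereas in the unweighted $G$ the same pair has combined weight $w_{i,j}+w_{j,i}\in\{1,2\}$ and contributes at least $(\oo_i-\oo_j)^2$. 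Summing over all edges gives the edge-by-edge estimate $\oo^T A' \oo \le 2\,\oo^T A \oo$, and since $\|\oo-\io\|^2 \ge 0$ this upgrades to $\co_{G'}(\oo) \le 2\,\co_G(\oo)$.

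Stringing these steps together yields
$$\co_{G'}(\no') \le \tfrac{9}{8}\,\co_{G'}(\oo') \le \tfrac{9}{8}\,\co_{G'}(\oo) \le \tfrac{9}{8}\cdot 2\,\co_G(\oo) = \tfrac{9}{4}\,\co_G(\oo),$$
as claimed. The only real content is the factor-$2$ edge comparison $\oo^T A' \oo \le 2\,\oo^T A \oo$: this is precisely where the doubling of each one-directional edge under symmetrization enters, and it is the sole source of the extra factor of $2$ beyond the undirected $9/8$ bound. I expect the main point to be careful about is the very first step, namely confirming that a bi-directed unit-weight graph induces exactly the same matrix $A'$, Nash equilibrium $\no'$, and optimum $\oo'$ as the corresponding undirected graph, so that Theorem~\ref{PoAlongProof} may be invoked without modification; the rest is routine. (For the claim to give a genuine $\tfrac{9}{4}$-approximation one also uses that adding edges cannot lower the optimum, so $\co_G(\oo)$ lower-bounds the best achievable Nash cost, but that fact is established earlier and is not needed for the inequality itself.)
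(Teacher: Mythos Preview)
Your proof is correct and follows essentially the same approach as the paper: apply Theorem~\ref{PoAlongProof} to the symmetrized graph $G'$ to get the $9/8$ factor, then use $\co_{G'}(\oo')\le\co_{G'}(\oo)\le 2\,\co_G(\oo)$ via the edge-doubling comparison. The paper compresses the last two steps into a single line (``in the worst case we must double all the edges in $G$, therefore $\co_{G'}(\oo')\le 2\co_G(\oo)$''), but your more explicit unpacking of the intermediate inequality and the edge-by-edge estimate is the same argument spelled out in full.
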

\begin{proof}
By Theorem \ref{PoAlongProof} we have that $\co_{G'}(\no') \leq
\frac{9}{8} \co_{G'}(\oo')$. Also notice that in the worst case, in
order to get from $G$ to $G'$, we must double all the edges in $G$.
Therefore $ \co_{G'}(\oo') \leq 2\co_{G}(\oo)$. By combining the two
we have that $ \co_{G'}(\no') \leq \frac{9}{4} \co_{G}(\oo)$.
\end{proof}

For weighted graphs we can also include reverse copies of edges that do appear in $G$ and hence
achieve an approximation ratio of $2$ for analogous reasons.

\subsection{Adding a Single Weighted Edge}

We now consider how to optimally choose the weight to put on a
single edge $(i,j)$, to minimize the cost of the Nash equilibrium.
Suppose we add weight $\rho$ to the edge $(i,j)$.
The modified Laplacian is
$\displaystyle{ L' = L + \rho e_i (e_i-e_j)^T,}$
where $e_i$ is the $i$th vector in the standard basis.
The modified Nash equilibrium is 
$\displaystyle{  \no' = (L'+I)^{-1} \io = ( (L+I) + \rho e_i (e_i-e_j)^T )^{-1} \io.}$

Using the Sherman-Morrison formula for the rank-one update to an 
inverse~\cite[\S2.1.3]{GVL}, we have
\begin{align*}
  \no' &= 
    \left[ (L+I)^{-1} - 
           \frac{ (L+I)^{-1} \rho e_i (e_i - e_j)^T (L+I)^{-1} } 
                { 1 + \rho (e_i-e_j)^T (L+I)^{-1} e_i }
    \right] \io \\
  &= 
    \no -
    \oi_{i}
    \left( \frac{\rho(\no_i-\no_j)}{1+\rho (\oi_{i,i}-\oi_{i,j})} \right),
\end{align*}
where $\oi_i = (L+I)^{-1} e_i$ is the influence of $\io_i$ on the Nash opinions
in the original graph.  
Therefore, $\oi_i$ gives the direction of change
of the Nash equilibrium when the weight on $(i,j)$ is increased:
the equilibrium opinions all shift in the direction of $\oi_i$.
We prove the following key properties of this influence
vector $\oi_i$:

\begin{lemma}
The entries of $\oi_i = (L+I)^{-1} e_i$ lie in $[0,1]$,
and $\oi_{i,i}$ is the unique maximum entry.
\label{lem:inf-vector-properties}
\end{lemma}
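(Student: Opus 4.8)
The plan is to exploit the fact that $M := L+I$ is a nonsingular M-matrix and to combine its nonnegative inverse with a discrete maximum principle. First I would record the structural facts about $M$: its off-diagonal entries $-w_{k,j}$ are nonpositive, its diagonal entries $1 + \sum_j w_{k,j}$ are positive, and because $L$ is a Laplacian with zero row sums ($L\mathbf{1}=0$) the matrix $M$ is strictly row-diagonally dominant, with $M_{k,k} - \sum_{j\neq k}|M_{k,j}| = 1$. Hence $M$ is a nonsingular M-matrix, so $M^{-1}$ exists and is entrywise nonnegative; in particular every entry of $\oi_i = M^{-1}e_i$ (the $i$th column of $M^{-1}$) satisfies $\oi_{i,k}\geq 0$.

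For the upper bound of $1$ I would reuse the identity $L\mathbf{1}=0$, which gives $(L+I)\mathbf{1}=\mathbf{1}$ and therefore $M^{-1}\mathbf{1}=\mathbf{1}$: every row of $M^{-1}$ sums to $1$. Since each $\oi_{i,k}$ is a single nonnegative term in a row summing to $1$, we get $\oi_{i,k}\leq 1$, so the entries lie in $[0,1]$ as claimed. Note this part uses only that $L$ has nonnegative weights and zero row sums, so it applies verbatim in both the undirected and directed settings.

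The heart of the argument is the uniqueness of the maximum, for which I would read the defining system $M\oi_i = e_i$ row by row. Writing $d_k = \sum_j w_{k,j}$ (with $w_{k,k}=0$), row $k\neq i$ gives $(1+d_k)\,\oi_{i,k} = \sum_j w_{k,j}\,\oi_{i,j}$, so $\oi_{i,k}$ is a $\tfrac{d_k}{1+d_k}$-discounted weighted average of the remaining coordinates; row $i$ gives $(1+d_i)\,\oi_{i,i} = 1 + \sum_j w_{i,j}\,\oi_{i,j} \geq 1$, whence $\oi_{i,i}\geq \tfrac{1}{1+d_i} > 0$. Now set $V=\max_k \oi_{i,k}$ and note $V\geq \oi_{i,i}>0$. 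If the maximum were attained at some index $m\neq i$, then using $\oi_{i,j}\leq V$ and $w_{m,j}\geq 0$ we would obtain $\oi_{i,m} = \tfrac{1}{1+d_m}\sum_j w_{m,j}\,\oi_{i,j} \leq \tfrac{d_m}{1+d_m}\,V < V$, a contradiction. Hence the maximum is attained only at $i$, so $\oi_{i,i}$ is the unique maximum entry.

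The main obstacle is the strictness in this maximum principle: the bound $\oi_{i,m} \leq \tfrac{d_m}{1+d_m}V$ only yields a strict inequality once we know $V>0$, which is exactly why the row-$i$ computation establishing $\oi_{i,i}\geq \tfrac{1}{1+d_i}$ is needed, and why the nonnegativity of $M^{-1}$ (the M-matrix step) cannot be skipped. As a sanity check I would also note the probabilistic reading $M^{-1}=(I-P)^{-1}\tilde{D}^{-1}$ with $P=\tilde{D}^{-1}W$ substochastic: here $\oi_{i,k}$ is a discounted expected number of visits to $i$ started from $k$, and the diagonal dominates because any walk reaching $i$ from a different node first survives a strictly positive killing probability, which is the same phenomenon responsible for $\tfrac{d_m}{1+d_m}<1$ above.
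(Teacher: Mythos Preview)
Your argument is correct. For the uniqueness of the maximum, you and the paper do exactly the same thing: read the equilibrium equations $(L+I)\oi_i=e_i$ at a putative maximizer $m\neq i$ and observe that the resulting discounted average forces $\oi_{i,m}<V$ once $V>0$. The paper phrases the strictness slightly differently (arguing some neighbor must exceed $\oi_{i,m}$), and establishes $V>0$ via ``$L+I$ nonsingular $\Rightarrow \oi_i\neq 0$'' rather than your explicit bound $\oi_{i,i}\geq 1/(1+d_i)$, but the mechanism is identical.

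Where you diverge is in proving the entries lie in $[0,1]$. The paper invokes the game-theoretic interpretation: $\oi_i$ is the Nash equilibrium for internal opinions $e_i$, and equilibrium is the limit of repeated averaging of numbers in $[0,1]$. You instead use the M-matrix structure of $L+I$ (strict row diagonal dominance with nonpositive off-diagonals) to get $M^{-1}\geq 0$, and then the identity $M^{-1}\mathbf{1}=\mathbf{1}$ to cap each entry by~$1$. Your route is more self-contained---it does not rely on the convergence of best-response dynamics, which the paper uses but never proves---and it makes the directed case transparent since only $L\mathbf{1}=0$ and sign structure are used. The paper's route is shorter to state and ties the lemma back to the model's dynamics. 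Either is fine here.
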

\begin{proof}
The influence vector $\oi_i$ is simply the Nash equilibrium for
  the internal opinion vector $e_i$.  The Nash equilibrium
  is the limit of repeated averaging starting from the internal opinions,
  and the average of numbers in $[0,1]$ is in $[0,1]$.  Thus
  the entries of $\oi_i$ are in $[0,1]$.

  We show that $\oi_{i,i}$ is the maximal entry by contradiction.  Suppose
  $\oi_{i,j}$ is maximal for some $j \neq i$.  Because $L+I$ is nonsingular,
  $\oi_i$ cannot be the zero vector, so $\oi_{i,j} > 0$.  The
  equilibrium equations for $j$ can be written
\begin{align*}
    \oi_{i,j} &= \frac{\sum_{k \in N(j)} w_{j,k} \oi_{i,k}}
                    {1+\sum_{k \in N(j)} w_{j,k}} \\
            & \leq \left( \frac{\sum_{k \in N(j)} w_{j,k}}
                              {1+\sum_{k \in N(j)} w_{j,k}} \right)
                  \max_{k \in N(j)} \oi_{i,k}
             \leq \max_{k \in N(j)} \oi_{i,k}
\end{align*}
  where the final inequality is strict if $\oi_{i,k} \neq 0$ for any $k \in N(j)$.
  But 
  $\oi_{i,k} \neq 0$
  for some $k \in N(j)$, since otherwise
  $\oi_{i,j}$ would be zero.  Therefore, there must be some $k \in N(j)$
  such that $\oi_{i,k} > \oi_{i,j}$, which contradicts the hypothesis that
  $\oi_{i,j}$ is maximal.
\end{proof}

We now show how to choose the {\em optimal} weight $\rho$ to add to edge $(i,j)$
to best reduce the social cost of the Nash equilbrium.
\begin{theorem}
The optimal weight $\rho$ to add to the edge $(i,j)$ can be
computed in polynomial time.
\label{thm:opt-weight-add}
\end{theorem}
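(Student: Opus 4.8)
The plan is to turn the problem into the one-dimensional minimization of an explicit low-degree rational function of $\rho$, which can then be solved by root-finding. The starting point is the Sherman--Morrison expression already derived, which shows that the new Nash equilibrium is $\no' = \no - t(\rho)\,\oi_i$, where $t(\rho) = \rho(\no_i-\no_j)/(1+\rho(\oi_{i,i}-\oi_{i,j}))$ is a scalar. The crucial structural fact is that, as $\rho$ varies, $\no'$ moves along the single fixed line through $\no$ in the direction $\oi_i$; only the scalar $t(\rho)$ depends on $\rho$. By Lemma~\ref{lem:inf-vector-properties} we have $\oi_{i,i}-\oi_{i,j}>0$, so the denominator $1+\rho(\oi_{i,i}-\oi_{i,j})$ is strictly positive for all $\rho\ge 0$; hence $t(\rho)$ is a well-defined M\"obius function with no pole on the feasible interval. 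First I would precompute $\no$ and $\oi_i$ by solving the two linear systems $(L+I)\no=\io$ and $(L+I)\oi_i=e_i$, which takes polynomial time.

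Next I would write the objective explicitly. Adding weight $\rho$ to the edge $(i,j)$ adds exactly the term $\rho(\ao_i-\ao_j)^2$ to the social cost, so $c_{G'}(\ao)=\co(\ao)+\rho(\ao_i-\ao_j)^2$ with $\co(\ao)=\ao^T A\ao+\|\ao-\io\|^2$, and we evaluate this at $\ao=\no'$. Since $\co$ is quadratic, substituting $\ao=\no-t\,\oi_i$ and using the equilibrium identity $(L+I)\no=\io$ to simplify the linear term yields a quadratic in $t$, $\co(\no')=\co(\no)-t\,\oi_i^T A\no + t^2(\oi_i^T A\oi_i+\|\oi_i\|^2)$, whose coefficients are inner products computable from the precomputed vectors. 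For the added term, a short calculation gives $\no'_i-\no'_j=(\no_i-\no_j)/(1+\rho(\oi_{i,i}-\oi_{i,j}))$, so $\rho(\no'_i-\no'_j)^2$ is again rational in $\rho$ with denominator $(1+\rho(\oi_{i,i}-\oi_{i,j}))^2$. Substituting $t=t(\rho)$ into the quadratic and combining over this common denominator, I would obtain $c_{G'}(\no')=P(\rho)/(1+\rho(\oi_{i,i}-\oi_{i,j}))^2$ for an explicitly computable polynomial $P$ of degree at most two.

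Finally I would minimize $f(\rho)=P(\rho)/(1+\rho(\oi_{i,i}-\oi_{i,j}))^2$ over $\rho\in[0,\infty)$. Because numerator and denominator are degree-two polynomials and the denominator has no root in the feasible region, $f$ is smooth there, and its derivative is a rational function whose numerator is a polynomial of bounded degree whose real roots are computable in closed form. I would then evaluate $f$ at each such critical point lying in $[0,\infty)$, together with the endpoint $\rho=0$ and the finite limit $\lim_{\rho\to\infty}f(\rho)$ (which exists since $\deg P\le 2$ matches the degree of the denominator), and return the smallest value. Every step --- two linear solves, forming $P$, and solving one bounded-degree polynomial equation --- runs in polynomial time, which proves the theorem. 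I expect the only real work to be the algebraic simplification confirming that the objective is a bounded-degree rational function with no pole on $[0,\infty)$; the conceptual crux making this possible is exactly Lemma~\ref{lem:inf-vector-properties}, which guarantees $\oi_{i,i}-\oi_{i,j}>0$ and so rules out any blow-up of the equilibrium as the weight increases.
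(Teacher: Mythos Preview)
Your approach is correct and follows the same skeleton as the paper: use Sherman--Morrison to see that $\no'$ moves along the fixed line $\no+\mathrm{span}(\oi_i)$, express the modified social cost as a low-degree function of a single scalar parameter, and minimize. The one substantive difference is the choice of parameter. You work directly in $\rho$ and obtain a degree-two rational function $P(\rho)/(1+\rho(\oi_{i,i}-\oi_{i,j}))^2$, which you then minimize by root-finding. The paper instead changes variables to $\phi=t(\rho)=\rho(\no_i'-\no_j')$, observes that $\rho(\no_i'-\no_j')^2=\phi(\no_i'-\no_j')$ is linear in $\phi$ (since $\no_i'-\no_j'$ is itself affine in $\phi$), and thereby reduces the entire objective to a genuine quadratic $\alpha_{ij}\phi^2-2\beta_{ij}\phi+c(\no)$ on the interval $[0,\phi_{\max}]$; the minimizer is then just the vertex of a parabola, clipped to the interval. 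Your route is perfectly valid and polynomial-time, but the paper's reparametrization buys a cleaner closed form and avoids the rational-function derivative altogether. A minor caution: your stated linear coefficient $-t\,\oi_i^T A\no$ drops a factor of $2$ and the cross term $\oi_i^T(\no-\io)$, which does not simplify away in the directed case since $L$ is not symmetric; the structure (quadratic in $t$ with computable coefficients) is of course unaffected.
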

\begin{proof}
Note that
\begin{align*}
   \no_i'-\no_j' &= 
   \left( \no_i-\no_j \right) 
   \left( 1-\frac{\rho (\oi_{i,i}-\oi_{i,j})}{1+\rho (\oi_{i,i}-\oi_{i,j})} \right) \\
   &=\frac{\no_i-\no_j}{1+\rho(\oi_{i,i}-\oi_{i,j})},
\end{align*}
and we can write the new Nash equilibrium as $\no' = \no - \phi \oi_i$, where
\[
  \phi = \frac{\rho (\no_i-\no_j)}{1+\rho(\oi_{i,i}-\oi_{i,j})}
       = \rho (\no_i-\no_j) \frac{\no_i'-\no_j'}{\no_i-\no_j}
       = \rho (\no_i'-\no_j').
\]
For small values of $\rho$, we have that 
$\phi = \rho (\no_i-\no_j) + O(\rho^2)$; 
and as $\rho \rightarrow \infty$, we have that 
$\phi \rightarrow \phi_{\max} = (\no_i-\no_j)/(\oi_{i,i}-\oi_{i,j})$ and
$\no_i'-\no_j' \rightarrow 0$.  Thus, adding a small amount of weight
to edge $(i,j)$ moves the Nash equilibrium in the direction of the
influence vector $\oi_i$ proportional to the weight $\rho$ and the
discrepancy $\no_i-\no_j$; while adding larger amounts of weight moves
the Nash equilibrium by a bounded amount in the direction of the
influence vector $\oi_i$, with the asymptotic limit of large edge
weight corresponding to the case when $i$ and $j$ have the same
opinion.

What does adding a weighted edge between $i$ and $j$ do to the social
cost at Nash equilibrium?  In the modified graph, the social cost is
\[
  c'(\ao) = \ao^T A \ao + \rho (\ao_i-\ao_j)^2 + \|\ao-\io\|^2.
\]
At the new Nash equilibrium, we have
\begin{align*}
  c'(\no') 
  &= \no'^T A \no' + \rho (\no_i'-\no_j')^2 + \|\no'-\io\|^2 \\
  &= \no'^T A \no' + \phi (\no_i'-\no_j') + \|\no'-\io\|^2.
\end{align*}
Because $\no'$ is a linear function of $\phi$, the above shows that
$c'(\no')$ is a quadratic function of $\phi$, which we can simplify to
$c'(\no') = \alpha_{ij} \phi^2 - 2\beta_{ij} \phi + c(\no),$
where
\begin{align*}
  \alpha_{ij} &= \oi_i^T (A+I) \oi_i -(\oi_{i,i}-\oi_{i,j}) \\
  \beta_{ij}  &= \oi_i^T \left( (A+I) \no - \io \right) - 
                \frac{1}{2}(\no_i-\no_j).
\end{align*}
The range of possible values for $\phi$ is between $0$ (corresponding to
$\rho = 0$) and $\phi_{\max}$ (corresponding to the limit as $\rho$ goes
to infinity).  Subject to the constraints on the range of $\phi$, 
the quadratic in $\phi$ is minimal either at $0$, 
at $\phi_{\max}$, or at $\beta_{ij}/\alpha_{ij}$ (assuming this point
is between $0$ and $\phi_{\max}$).
We can therefore determine the optimal weight for a single edge
in polynomial time.
\end{proof}

Note that the above computations also give us a simple formula for
the gradient components $\gamma_{ij}$ corresponding to differentiation
with respect to $w_{ij}$:
\begin{align*}
  \gamma_{ij} &\equiv \frac{d[c'(\no')]}{d\rho} 
  = \frac{d[c'(\no')]}{d\phi} \frac{d\phi}{d\rho}
  = -2\beta_{ij} (\no_i-\no_j) \\
  &= (\no_i-\no_j)^2 - 2 (\no_i-\no_j) \oi_i^T ((A+I) \no - \io).
\end{align*}
The residual vector $(A+I) \no - \io$ measures the extent to which $\no$ fails
to satisfy the equation for the socially optimal opinion $\oo$.  If
this vector is large enough, and if the influence vector $v_i$ is
sufficiently well aligned with the residual, then adding weight to 
the $(i,j)$ edge can decrease the social cost at Nash equilibrium.
Thus, though computing a globally optimal choice of additional edge weights
may be NP-hard, we can generally compute locally optimal edge additions
via the method of steepest descent.

\xhdr{Acknowledgments}
We thank 
Michael Macy for valuable discussion.

\bibliographystyle{IEEEtranS}


\end{document}